\newcommand{\p}{\partial}
\newcommand{\lan}{\langle}
\newcommand{\ran}{\rangle}
\newcommand{\ra}{\rightarrow}
\newcommand{\lra}{\leftrightarrow}
\newcommand{\uva}{{\mathbf{\hat a}}}
\newcommand{\uvx}{{\mathbf{\hat x}}}
\newcommand{\uvy}{{\mathbf{\hat y}}}
\newcommand{\bfzero}{{\mathbf{0}}}
\renewcommand{\(}{\left(}
\renewcommand{\)}{\right)}
\renewcommand{\[}{\left[}
\newcommand{\mt}{\mapsto}
\newcommand\bpm            {\begin{pmatrix}}
	\newcommand\epm           {\end{pmatrix}}
\newcommand{\bs}{\bigskip}
\def\app#1#2{%
	\mathrel{%
		\setbox0=\hbox{$#1\sim$}%
		\setbox2=\hbox{%
			\rlap{\hbox{$#1\propto$}}%
			\lower1.1\ht0\box0%
		}%
		\raise0.25\ht2\box2%
	}%
}
\newcommand{\tw}{\textwidth}
\newcommand{\ct}{\Theta}
\newcommand{\ope}\odot
\newcommand{\bi}{\begin{itemize}}
	\newcommand{\ei}{\end{itemize}}
\newtheorem{theorem}{Theorem}
\newtheorem{definition}{Definition}
\newtheorem{proposition}{Proposition}
\newtheorem{lemma}{Lemma}
\theoremstyle{definition}
\newcommand\bpro		  {\begin{proposition}}
	\newcommand\epro 		  {\end{proposition}}
\newcommand\bproof			  {\begin{proof}}
	\newcommand\eproof 		  {\end{proof}}
\newcommand\ed            {\end{definition}}
\newcommand\be            {\begin{equation}}
\newcommand\ee            {\end{equation}}
\newcommand\ba            {\begin{aligned}}
\newcommand\ea            {\end{aligned}}
\newcommand\bea{\begin{equation}\begin{aligned}}
	\newcommand\eea{\end{aligned}\end{equation}}
\renewcommand{\ss}{\subsection}
\renewcommand{\a}{\alpha}
\newcommand{\g}{\gamma}
\renewcommand{\S}{\Sigma} 
\renewcommand{\l}{\lambda}
\newcommand{\bfdel}{{\boldsymbol{\delta}}}
\newcommand{\bfa}{\mathbf{a}}
\newcommand{\bfd}{\mathbf{d}}
\newcommand{\bfr}{\mathbf{r}}
\newcommand{\zt}{\mathbb{Z}_2}
\newcommand{\zn}{\mathbb{Z}_N}
\newcommand{\rr}{\mathbb{R}}
\newcommand{\qq}{\qquad}
\newcommand{\zz}{\mathbb{Z}}
\newcommand{\mcb}{\mathcal{B}}
\newcommand{\mcd}{\mathcal{D}}
\newcommand{\mca}{\mathcal{A}}
\newcommand{\mcv}{\mathcal{V}}
\newcommand{\mcr}{\mathcal{R}}
\newcommand{\sfA}{\mathsf{A}}
\newcommand{\sfB}{\mathsf{B}}
\newcommand{\sfC}{\mathsf{C}}
\newcommand{\sfD}{\mathsf{D}}
\newcommand{\sfN}{\mathsf{N}}
\newcommand{\sfP}{\mathsf{P}}
\renewcommand\qq{\qquad}
\newcommand{\maj}{{\sf maj}}
\newcommand{\dam}{{\sf dam}}
\newcommand{\plog}{{p_{\sf log}}} 
\newcommand{\algorule}{\par\noindent\rule{\linewidth}{0.4pt}\par} 
\begin{document}

	\title{Exploring the Landscape of Non-Equilibrium Memories with Neural Cellular Automata}
	
	\author{Ehsan Pajouheshgar}
	\affiliation{
    IC and SB Schools, École Polytechnique Fédérale de Lausanne (EPFL), Switzerland
    }

    \author{Aditya Bhardwaj}
    \affiliation{Department of Computing and Mathematical Sciences, Caltech, Pasadena, CA 91125, USA}
    \affiliation{Institute for Quantum Information and Matter, Caltech, Pasadena, CA 91125, USA}

    \author{Nathaniel Selub}
    \affiliation{Department of Physics, University of California Berkeley, Berkeley, CA 94720, USA}

    \author{Ethan Lake}
	\email{elake@berkeley.edu}
    \affiliation{Department of Physics, University of California Berkeley, Berkeley, CA 94720, USA}	
    
	\begin{abstract}
		We investigate the landscape of many-body memories: families of local non-equilibrium dynamics that retain information about their initial conditions for thermodynamically long time scales, even in the presence of arbitrary perturbations. In two dimensions, the only well-studied memory is Toom's rule. Using a combination of rigorous proofs and machine learning methods, we show that the landscape of 2D memories is in fact quite vast. We discover memories that correct errors in ways qualitatively distinct from Toom's rule, have ordered phases stabilized by fluctuations, and preserve information only in the presence of noise. Taken together, our results show that physical systems can perform robust information storage in many distinct ways, and demonstrate that the physics of many-body memories is richer than previously realized. Interactive visualizations of the dynamics studied in this work are available at \href{https://memorynca.github.io/2D}{https://memorynca.github.io/2D}.
	\end{abstract}
	
	\maketitle
	
	\paragraph*{Introduction: } 
	The design of many-body systems that retain memory of their initial conditions in the presence of perturbations, {\it i.e.}, systems which function as {\it robust many-body memories}, is a fundamental problem at the intersection of non-equilibrium statistical mechanics \cite{ponselet2013phase}, fault-tolerant computation \cite{gacs2001reliable}, and biological information processing \cite{wagner2013robustness}. Memory in locally-interacting systems must be achieved through collective, distributed error correction: errors must be autonomously detected and corrected by the noisy dynamics itself, and the general mechanisms by which this may be accomplished are only partially understood \cite{gacs1978one, gacs2001reliable,cirel2006reliable,toom1980stable,Harrington2004,mordvintsev2020growing,rakovszky2024defining,ray2024protecting,balasubramanian2024local, paletta2025high,lake2025fast}. 
	
	This work investigates non-equilibrium many-body memories in 2D spin systems. The dynamics we study are designed to remember a single bit of information, which they encode in the sign of the system's magnetization.
	Concretely, we consider dynamics acting on spins $s_\bfr \in \{\pm 1\}$ located on the sites of an $L\times L$ lattice with periodic boundary conditions. The dynamics is defined as a probabilistic cellular automaton (CA), specified by a noise strength $p \in [0,1]$, a noise bias $\eta \in [-1,1]$, and a CA rule $\mca$. Letting $\mcr_\bfr$ denote the $3\times 3$ block of sites centered at a site $\bfr$, the dynamics updates $s_\bfr$ according to a particular function $\mca$ of the spins on $\mcr_\bfr$. When noise is included, the dynamics acts by sending $ s_\bfr \mt \mca(s|_{\mcr_\bfr})$ with probability $1-p$, and $s_\bfr \mt \pm 1 $ with probability $p(1\pm \eta)/2$. 
	
	In the math and computer science literature, CA are normally updated {\it synchronously}: they are run in discrete time, and all sites are updated at once. From the physics perspective, an {\it asynchronous} update scheme---where the sites update at times governed by independent Poisson processes---is often more natural, as it corresponds to a continuous-time Markov process. We will consider both types of updates in this work.

	We say that $\mca$ defines a {\it robust memory} if there is an open ball in the $(p,\eta)$ plane within which the dynamics preserves the sign of the magnetization $m = L^{-2} \sum_\bfr s_\bfr$ for a time that diverges as $L\ra\infty$. Having a robust memory is only possible if regions of errors are corrected in a time proportional to their linear size: this is because a bias $\eta$ favoring a minority domain causes it to ballistically expand, and this expansion rapidly destroys the encoded information if the minority domain is not quickly eroded by the dynamics. In this context, the problem of constructing a robust memory is therefore equivalent to finding a CA that quickly identifies and corrects arbitrarily-shaped domains of errors \footnote{An example of dynamics which does not erase minority domains quickly enough is Glauber dynamics for the 2D Ising model at $T< T_c$. A domain of size $R$ is corrected only in time $\sim R^2$, which leads to the memory being destroyed in the presence of biased noise (or essentially equivalently, in the presence of a small magnetic field). }.
	
	The most well-known robust memory in 2D is Toom’s rule, $\mca_{\sf Toom}(s_\bfr) = \maj(s_\bfr, s_{\bfr+\uvx}, s_{\bfr+\uvy})$, where $\maj(S)$ takes the majority vote of the set $S$ \cite{toom1980stable}. The triangular shape of the majority voting region ensures that minority domains (regardless of their shape) are ballistically eroded starting from their top-right corners. 
	To date, all constructions of robust memories with local order parameters in the physics literature descend from this single mechanism \cite{ray2024protecting,machado2023absolutely,mcginley2022absolutely}. It is then natural to ask: is Toom's rule the only way of achieving a robust 2D memory? 
	
	\begin{figure*}
		\includegraphics[width=\tw]{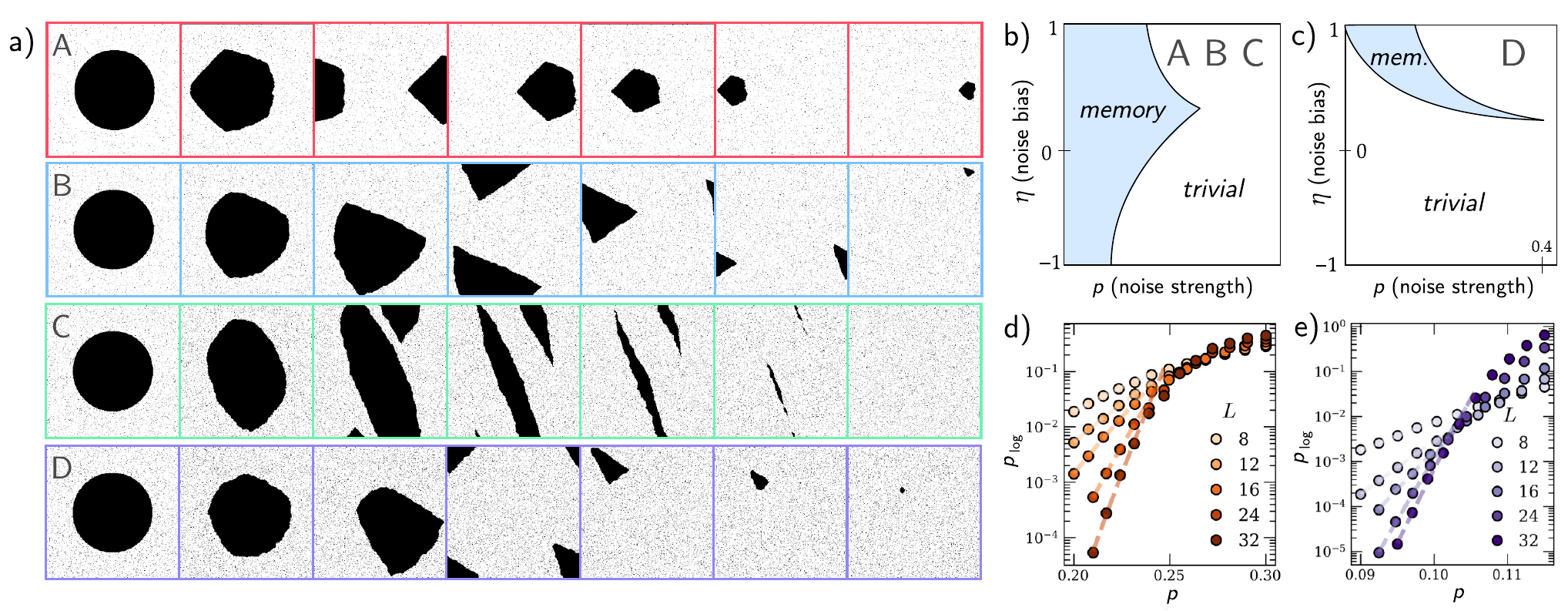} 
		\caption{\label{fig:megafig} Histories showing the correction of large minority domains of $1$ spins (black) for different memories, with the noise bias $\eta=+1$ chosen to maximally favor the error. Time runs left to right. {\sf b)} A schematic phase diagram in the $(p,\eta)$ plane for typical rules found with our NCA architecture. {\sf c)} Schematic phase diagram of the $\sfD$ rule, with a re-entrant noise-stabilized memory phase; the tip of the memory phase is around $(p,\eta) \approx (0.4,0.08)$. ${\sf d)}$ $\plog$ against $p$ for asynchronous updates, shown for the $\sfA$ automaton at $\eta = 0$. The dashed lines are fits to $\plog \propto p^{\l L}$ with $\l\approx1.3$. ${\sf e)}$ the same as ${\sf d}$, but for the ${\sfC}$ automaton at $\eta = -1$; here $\l\approx2.6$. }
	\end{figure*}
	
	\paragraph*{A first class of examples:} as a first exact result, we prove that the triangular geometry of Toom's rule is in fact not special, and that most ways of doing local majority voting yield robust memories under synchronous updates:
	
	\begin{theorem}[asymmetric majority voters are robust] \label{thm:majtheorem} 
		Suppose $\mca(s|_{\mcr_\bfr}) = \maj(\{s_{\bfr' \in \mcv_\bfr}\})$ performs a majority vote on a subset of spins in the neighborhood $\mcv_\bfr \subset \mcr_\bfr$. Then $\mca$ is a robust memory under synchronous updates iff $\mcv_\bfr$ is \textbf{not} invariant under a $\pi$ rotation $C_\pi$ about any lattice point. 
	\end{theorem}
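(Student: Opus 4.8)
The plan is to reduce the theorem to a purely geometric statement about the velocities of flat interfaces, and then to invoke Toom's stability theorem. As the discussion preceding the theorem explains, under synchronous updates a monotone, spin-flip-symmetric rule defines a robust memory exactly when its noise-free dynamics ballistically erodes minority droplets, i.e. when $\mca$ is an \emph{eroder}: any finite island is removed in a time proportional to its linear size. Granting this reduction, the entire content is the claim that the majority rule on $\mcv_\bfr$ is an eroder iff $\mcv_\bfr$ has no center of $\pi$-symmetry.

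First I would compute the motion of a flat interface. Filling the half-plane $\{\hat\bfn\cdot\bfx\le c\}$ with the minority value and updating once, a short calculation shows the occupied region becomes $\{\hat\bfn\cdot\bfx\le c-m(\hat\bfn)\}$, where $m(\hat\bfn)=\mathrm{med}_{\bfv\in\mcv_\bfr}(\hat\bfn\cdot\bfv)$ is the median of the projections of the voting offsets. Thus a convex droplet with support function $h_0$ evolves, to leading order, by $h_t(\hat\bfn)=h_0(\hat\bfn)-t\,m(\hat\bfn)$, and it is erased in finite time iff the constraints $\{\hat\bfn\cdot\bfx\le h_t(\hat\bfn)\}$ eventually become infeasible. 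By convex (LP) duality this fails — the droplet survives, merely drifting — iff there is a vector $\bfc$ with $m(\hat\bfn)\le\hat\bfn\cdot\bfc$ for all $\hat\bfn$. Since $m(-\hat\bfn)=-m(\hat\bfn)$, any such $\bfc$ forces $m(\hat\bfn)=\hat\bfn\cdot\bfc$ identically. Hence $\mca$ is an eroder iff the median function $m(\hat\bfn)$ is not linear in $\hat\bfn$.

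The crux is then the lemma that, for $\mcv_\bfr\subset\mcr_\bfr$ with $|\mcv_\bfr|$ odd, $m(\hat\bfn)=\hat\bfn\cdot\bfc$ for all $\hat\bfn$ iff $\mcv_\bfr=2\bfc-\mcv_\bfr$. The forward direction is immediate: central symmetry makes the multiset $\{\hat\bfn\cdot\bfv\}$ symmetric about $\hat\bfn\cdot\bfc$, so its median is $\hat\bfn\cdot\bfc$. For the converse — the main obstacle — I would first note that the median is always the projection of an actual offset, so matching $\hat\bfn\cdot\bfc$ for generic $\hat\bfn$ forces $\bfc\in\mcv_\bfr$; in particular the symmetry center is automatically a lattice point, which is why the theorem phrases the condition as invariance under $C_\pi$ about a lattice point. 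Writing the remaining offsets as $\bfw=\bfv-\bfc$, the median condition becomes the requirement that every line through the origin separate the $\bfw$ into two equal halves, i.e. that their angular distribution be invariant under rotation by $\pi$. A priori this matches only directions and not magnitudes, but the confinement $\mcv_\bfr\subset\{-1,0,1\}^2$ closes the gap: for an off-center $\bfc$ the shifted offsets all lie in a closed half-plane through the origin, so balance forces them onto the bounding line and hence to be collinear and symmetric, while for $\bfc$ at the central site each ray from the origin meets the block in at most one lattice point, so the balance pairs each $\bfw$ with $-\bfw$. Either way $\mcv_\bfr=2\bfc-\mcv_\bfr$. I expect this finite combinatorial step to be the delicate part, and as a backstop the equivalence can be verified by direct enumeration of the finitely many odd subsets of the nine-cell block (up to the symmetries of the square).

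Finally I would assemble the two directions. For sufficiency, non-$\pi$-symmetry gives a genuine eroder (there exist directions $\hat\bfn_i$ and weights $\lambda_i>0$ with $\sum_i\lambda_i\hat\bfn_i=0$ and $\sum_i\lambda_i m(\hat\bfn_i)>0$, so droplets shrink in linear time); Toom's theorem \cite{toom1980stable} then upgrades this to exponentially long retention of the sign of $m$ throughout an open ball of $(p,\eta)$, including biased noise, which is exactly a robust memory. For necessity, $\pi$-symmetry yields $m(\hat\bfn)=\hat\bfn\cdot\bfc$, so the noise-free dynamics merely rigidly translates droplet boundaries without eroding them; by the reduction above — the same mechanism as the Glauber example in the footnote — an arbitrarily small favoring bias then expands the disfavored domain ballistically with nothing to correct it, so the sign of $m$ is not preserved and $\mca$ is not a robust memory.
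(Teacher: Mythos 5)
Your sufficiency argument (no $C_\pi$ symmetry $\Rightarrow$ robust memory) is essentially correct, and it takes a genuinely different route from the paper's. Where the paper analyzes domain walls along only four rational directions (the lines $\g_{x/y}^\pm$), proves a wall-motion proposition that exploits $a=1$, and finishes with a partition-counting argument on the Moore neighborhood, you work with all directions at once: the exact observation that a minority half-plane advances with normal displacement $m(\hat\bfn)=\mathrm{med}_{\bfv\in\mcv}(\hat\bfn\cdot\bfv)$ per step, combined with Helly/Farkas duality, is precisely Toom's erosion criterion specialized to majority rules, and your lemma that linearity of the median forces central symmetry inside the $3\times3$ block is correct (the center must lie in $\mcv$, antipodal balance of directions follows, and the block geometry upgrades direction-pairing to genuine pairing). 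Your "directions versus magnitudes" remark is exactly the right diagnosis of why the neighborhood restriction matters: the paper's $a=2$ counterexample $\mcv=\{\bfzero,\uvx-\uvy,-\uvx-\uvy,2\uvx+2\uvy,-2\uvx+2\uvy\}$ has antipodally paired directions with unequal magnitudes, hence a linear median with no $C_\pi$ symmetry, and your lemma correctly shows this loophole closes when $\mcv\subset\sfN_1(\bfzero)$.

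The gap is in the necessity direction, at the step "the droplet survives, merely drifting." Duality does not give this. Monotonicity yields only an \emph{upper} bound: the damage at time $t$ is \emph{contained} in the intersection of the drifted half-planes. That this bounding region stays nonempty forever says nothing about whether the actual damage inside it persists---an upper bound can never prove survival, and in general it is false that droplets rigidly drift. Concretely, for the $C_\pi$-symmetric nine-cell Moore majority, the corner site of a square droplet sees only $4$ of $9$ minority spins and flips on the first step, so squares strictly shrink at their corners even though every flat wall is stationary; that \emph{some} shape (here, a suitable octagon) survives forever is a separate fact requiring its own construction. Thus your "erased in finite time iff the constraints become infeasible" is proven only in the eroder direction, and the direction your necessity argument rests on is exactly the one left open. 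Two ways to close it: (i) the paper's elementary route---stationary flat walls plus an exhaustive check that junctions between walls of sufficiently close slopes are invariant, yielding arbitrarily large invariant droplets; or (ii) the abstract route the paper also points to---Toom's minimal-zero-set criterion, noting that under central symmetry every majority subset $Z\subset\mcv$ must have the center $\bfc$ in $\mathrm{conv}(Z)$ (otherwise a separating line would place a strict majority of $\mcv$ on one side of $\bfc$, contradicting that the median of projections is $\hat\bfn\cdot\bfc$), so the hulls have nonempty common intersection and $\mca$ is not an eroder. With either supplement (and the same informal biased-noise argument both you and the paper use to pass from "non-eroder" to "not a robust memory"), your proof goes through.
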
 
	The proof makes use of a theorem of Toom \cite{toom1980stable} regarding {\it monotone} cellular automata (viz. those for which $\mca$ defines a monotonic Boolean function \footnote{A function $f(x) \, : \, \{-1,1\}^n \ra \{-1,1\}$ is monotonic if increasing the number of 1s in the input cannot change an output of 1 to an output of $-1$.}, such as a majority vote), which states that any monotone eroder is a robust memory at small enough noise strengths. We prove erosion by proving that minority domains are always eroded ballistically along certain directions if $\mcv_\bfr$ has no $C_\pi$ symmetry. Conversely, with $C_\pi$ symmetry, we show that there are always certain shapes of minority domains that, regardless of their size, are never corrected by $\mca$; under biased noise, these domains grow and rapidly destroy the encoded information. The details are provided in App.~\ref{app:majproof}. 
	
	As a concrete example, consider the CA rule $\sfA$ defined by the voting region $\mcv_\bfr = \bfr + \{\uvy,\uvx+\uvy,\uvx,-\uvy,-\uvx\}$.  As illustrated in the top row of Fig.~\ref{fig:megafig}~a, $\sfA$ erodes minority domains by mapping them to a series of increasingly small pengaton-shaped regions. 

	As stated, Theorem~\ref{thm:majtheorem} only applies to dynamics with synchronous updates. The extension to asynchronous updates is nontrivial, as there exist systems which retain memory only when updated synchronously \cite{squeezing}. To establish results for asynchronous dynamics, we therefore turn to numerics. We use Monte Carlo simulations to estimate the {\it logical failure rate}, defined as the probability for a magnetization flip to occur by time $O(L)$ when the system is initialized in its least stable memory state: 
	\be p_{\sf log} = \max_{s_\pm} \sfP[{\sf maj}(\mca^{cL}(s)) \neq \maj(s)],  \ee 
	where $s_\pm$ are respectively the all-$1$s and all-$(-1)$s states, and $c$ is a constant, fixed at $c = 5$ in our numerics. For any memory we must have $\lim_{L \ra \infty} p_{\sf log} = 0$, and for the CA under consideration we expect, for small enough $p$, the scaling 
	\be \label{plogscaling} p_{\sf log} \sim (p/p_c)^{\l L}\ee 
	for constants $p_c,\l$ \footnote{This is due to the fact that for all of the memories considered in the main text, flipping $\ct(2L)$ spins along two non-contractible loops that span each cycle of the torus is sufficient to cause a magnetization flip.}.
	The logical failure rate $\plog$ for $\sfA$ dynamics with asynchronous updates and bias $\eta = 0$ is plotted in Fig.~\ref{fig:megafig}~d, where we observe a threshold crossing at $p_c \approx 26\%$---significantly higher than the $p_{c,{\sf Toom}}\sim 14\%$ of Toom's rule \cite{bennett1985role,ray2024protecting}.

	\paragraph*{Neural cellular automata: } Do there exist robust memories beyond asymmetric majority voters? Addressing this question via brute-force search is impossible: the number of CA rules taking $3\times 3$ blocks of spins as input is $2^{2^9}$---vastly greater than the number of atoms in the observable universe---and random rules are empirically observed to have a negligibly small chance of being robust memories. To more efficiently explore the space of possible dynamics, we use a machine-learning-guided approach inspired by the neural cellular automaton (NCA) framework~\cite{mordvintsev2020growing}. Using this setup, we find robust memories {\it beyond} asymmetric majority voters. In this setting, $\mca$ is initially parameterized as a random continuous-state CA rule (operating on $\rr$-valued spins), implemented by a small neural network. The neural network is optimized using gradient descent to minimize a loss function that is designed to measure $\mca$'s ability to erase large domains of errors in the presence of noise. Including large error regions in the training data, instead of relying on their spontaneous occurrence from noisy dynamics, enables faster and more targeted learning. After a fixed number of optimization steps,  we discretize $\mca$ to $\zt$-valued spins. Further details on the NCA architecture and training are provided in App.~\ref{app:nca}.
	
	In a set of 1000 training runs with different random seeds, 37 cellular automata converged to robust memories. None of these 37 memory rules implemented majority votes, none were symmetric under $s_\bfr \mt -s_\bfr$ \footnote{$\mca$ is symmetric under spin-flip if $\mca(-s) = -\mca(s)$ for all input states $s$.}, and all but 6 were monotonic. Symmetric rules occurred only outside the memory class, where they were common: out of the 387 symmetric rules observed, all were monotonic, and none were memories. Monotonicity was similarly widespread beyond the memory class, with 994 monotonic rules in total.
	In most cases where the learned rule was not a memory, the dynamics was observed to perform something similar to a $C_\pi$-symmetric majority vote.  Finally, while training was performed only with synchronous updates, all of the learned memory rules were observed to remain robust after making the updates asynchronous.

	Representative examples of learned memories are shown in the last three rows of Fig.~\ref{fig:megafig}~a. The automaton $\sfB$ in the second row erodes domains by shrinking them into small triangle shapes, while $\sfC$ in the third row squeezes domains along an oblique angle. 
	Both ${\sfB,\sfC}$ are monotonic, and possess no symmetries. 
	These rules have phase diagrams and a scaling of $\plog$ qualitatively similar to that of $\sfA$, and their thresholds are tabulated in Tab.~\ref{tab:thresholds}.

	\begin{table}[htbp]
		\centering
		\renewcommand{\arraystretch}{1.2}
		\begin{tabular}{c@{\hspace{1em}}c@{\hspace{1em}}c@{\hspace{1em}}c@{\hspace{1em}}c}
			
			$p_c$ (\%)& ${\sf A}$ & ${\sf B}$ & ${\sf C}$ & ${\sf D}$ \\
			\toprule 
			async. & 4/26/4&  4/22/3 & 8/26/10 & noise stab. 
			\\ sync. & 5/32/5 & 6/32/5 & 9/34/11  
			&noise stab. \\ 
			\midrule
			\bottomrule
		\end{tabular}
		\caption{Approximate thresholds (in percent) of the memories studied in the main text, for both asynchronous (top) and synchronous (bottom) updates. The three values $x/y/z$ indicate the maximum noise probability thresholds at biases of $\eta = 1, 0, $ and $-1$, respectively. ``noise stab.'' indicates that the memory is present only at nonzero noise; see the main text. }
		\label{tab:thresholds}
	\end{table}
	
	\paragraph*{Noise-stabilized order: }
	The automaton $\sfD$ shown in the fourth row of Fig.~\ref{fig:megafig}~a, which is one of the non-monotonic rules found in training, is rather unusual. At $p=0$ it has exponentially many absorbing states, created by inserting small islands of $-1$s in a background of $1$s (in the language of \cite{rakovszky2024defining,gacs_eroder_slides}, it is thus {\it not} a linear eroder). 
	Under  unbiased noise these islands percolate, resulting in a unique steady state dominated by $-1$s. With a modest amount of noise biased to favor $1$, however, the noise and CA dynamics render the islands unstable. This results in the phase diagram of Fig.~\ref{fig:megafig}~c, featuring a ``re-entrant'' memory phase in an intermediate range of $p$. The $\sfD$ automaton is thus a {\it noise-stabilized memory}, and its existence demonstrates that even non-eroding automata can define robust memories. 
	This phenomenon can in some sense be regarded as a non-equilibrium analogue of the Pomeranchuk effect \cite{richardson1997pomeranchuk,poulin2019self}, but arises due to a quite different microscopic mechanism. The only other example of noise-stabilized order the authors are aware of is the recent Ref.~\cite{marsan2025perturbed}, which is based on a modification of Gacs' complex 1D automaton \cite{gacs2001reliable}.

	\paragraph*{The failure of mean-field theory: } The most natural first pass at theoretically understanding the phase diagrams observed so far is to perform a mean-field (MF) analysis. This is done by writing down a master equation for $\p_t \lan s_\bfr\ran$ (with the expectation value $\lan \cdot\ran$ taken in the non-equilibrium steady state), and assuming that correlation functions factorize as $\lan s_\bfr s_{\bfr'} \ran \approx \lan s_\bfr \ran \lan s_{\bfr'} \ran \equiv m^2$, where we have assumed $\lan s_\bfr\ran$ is independent of $\bfr$ (as it is for all of the steady states we consider). Doing this yields (see App.~\ref{app:meanfield}) 
	\be \p_t m = p\eta + (1-p) \mca_{MF}(m) - m \equiv - \p F_{\rm eff} / \p m\ee 
	where $\mca_{MF}(m)$ is obtained by writing down $\mca(s|_{\mcr_\bfr})$ as a polynomial in the spins on $\mcr_\bfr$, and substituting $s_\bfr \mt m $. 
	
	MF theory predicts an ordered phase when the effective free energy $F_{\rm eff}$ develops multiple distinct minima. 
	For memories where $\mca$ performs a majority vote, $F_{\rm eff}$ is a conventional-looking free energy with multiple minima at $p<p_{c,MF}$, and for  the $\sfA$ automaton we obtain $p_{c,MF} = 7/15 > p_c$. Having $p_{c,MF} > p_{c}$ is of course unsurprising: since MF theory neglects fluctuations, it is expected to always {\it over}estimate the tendency towards ordering. 
	
	Interestingly,  the CA rules ${\sfB,\sfC,\sfD}$ found in training are observed to be {\it unstable} in MF: for these rules, $F_{\rm eff}$ always has a unique minimum, regardless of $p$. These are the first examples known to the authors of systems where MF theory entirely misses an ordered phase, and incorrectly predicts a disordered phase for all $p$. The order in these systems is thus {\it fluctuation-stabilized} \footnote{This phenomenon might be called ``order by disorder,'' but is quite distinct from what is usually meant by this phrase, which conventionally refers to situations where (fine-tuned) systems with many degenerate ground states have their degeneracy lifted by fluctuations. Conventional ``order by disorder'' {\it decreases} the number of degenerate states, and thus in some sense actually reduces the amount of order. By contrast, the dynamics considered in this work uses fluctuations to {\it increase} the number of steady states, thereby increasing the amount of order.  }, and  future work \cite{squeezing} will show how fluctuations beyond MF stabilize the ordered phase. 
	
	{\it Synchronicity transitions: }
	By explicitly modifying their rule tables, the learned rules can be used as jumping-off points for exploring different types of error-correcting dynamics. As one illustrative example, we discuss a  modification that changes how these automata behave with respect to changes in synchronicity. 
	
	\begin{figure}
		\includegraphics[width=.5\tw]{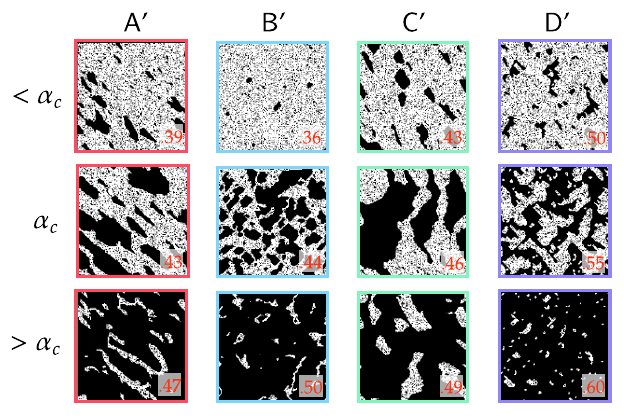} 
		\caption{\label{fig:synch} Synchronicity transitions in the modified CA rules $\sfA',\sfB',\sfC',\sfD'$. Dark colors are $1$s, and each image shows the state of a system after 50 steps of noiseless dynamics following a quench from the all-$(-1)$s state. The number at bottom right of each panel indicates the value of $\a$, with $\a_c$ the critical value of the synchronicity parameter. For rules $\sfA',\sfB',\sfC'$, the all-1s state is rapidly reached when $\a > \a_c$. For $\sfD'$, the patterns shown in the bottom-right panel persist for infinite times in the absence of noise.} 
	\end{figure}
	In the absence of noise, all of the learned NCA rules have the all 1s and all $-1$s configurations as absorbing states.
	Consider destroying the $-1$ absorbing state by modifying the CA rule to send $\mca(s|_{\mcr_\bfr}) \mt +1$ if all spins in $\mcr_\bfr$ are $ -1$.
	For the examples considered above, this change immediately destroys the memory for synchronous updates, even in the absence of noise: an initial state with mostly $-1$s is immediately mapped to a state with mostly $1$s, leading to an irrecoverable loss of the encoded information. On the other hand, when the updates are asynchronous, isolated errors introduced by the modification to the transition rule have a high probability of being corrected by updates on adjacent sites,  and the memory is retained up to a non-zero (but smaller) critical noise strength. 
	These automata thus define {\it asynchronicity-protected} memories, and stand in contrast with all other known memory CA, which become {\it more} robust when the updates are made synchronous. 
	
	To further explore this phenomenon, consider updates where each site has a probability $\a \in [0,1]$ of being updated at each time step. At small enough $p$, each of the studied memories undergoes a percolation transition at an intermediate update fraction $\a_c$ between a robust memory phase (at small $\a$), and a trivial phase dominated by $1$s (at large $\a$); this phenomenon is somewhat similar to the synchronicity transitions that occur in other (non-robust) CA, such as Conway's game of life \cite{bersini1994asynchrony,fates2007asynchronism}. As shown in Fig.~\ref{fig:synch}, the spatial spin patterns produced across this transition can be quite rich. 
	
	\begin{figure} 
		\includegraphics[width=.48\tw]{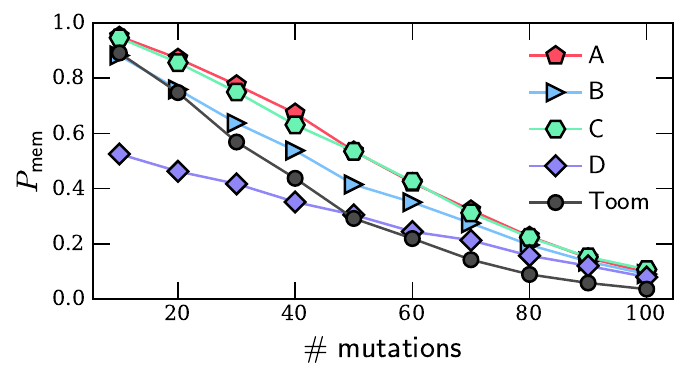} 
		\includegraphics[width=.48\tw]{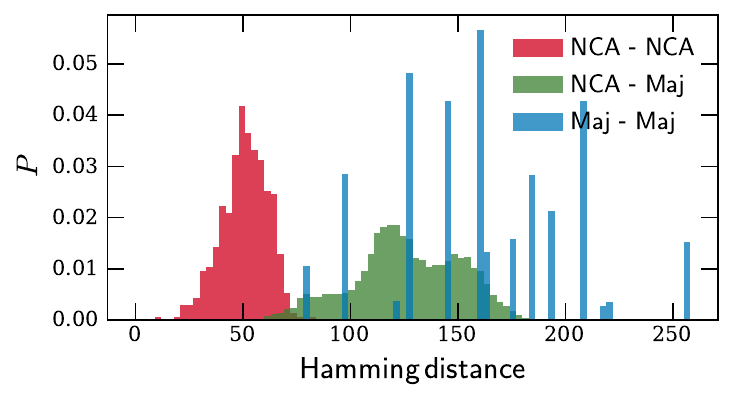} 
		\caption{	\label{fig:mutations} {\it Top:} Probability that a rule remains a memory $P_{\sf mem}$ as a function of the number of ``mutations'' made to its rule table; each data point is based on sampling 5000 randomly mutated rules. {\it Bottom:} Hamming distances between pairs of distinct rule tables for the learned NCA and majority-vote memories. Each point is the distance between one rule from category A and one from category B.} 
	\end{figure} 
	
	\paragraph*{The rule landscape: } 
	The modification just discussed was chosen in a quite special way, and it is natural ask what happens when {\it random} modifications are made. To study this, note that a given CA is determined by a binary string of length $2^9$, with each digit specifying the output of the CA rule on one of the $2^9$ possible inputs. We then take this binary string, which may be regarded as that CA's ``genetic code'', and subject it to ``mutations'' by randomly flipping bits. In the top panel of Fig.~\ref{fig:mutations}, we plot the probability $P_{\sf mem}$ that the studied rules remain a memory as a function of the number of inflicted mutations, where $P_{\sf mem}$ is estimated as the fraction of mutated rules which remain able to shrink both types of minority domains in the absence of noise. Rules ${\sfA,\sfB,\sfC}$ are seen to be quite robust to mutations, with $\gtrsim 50$ mutations required to bring $P_{\sf mem}$ below $0.5$. 
	
	This result shows that each learned CA rule carves out an extended region of robust memories within the space of all CA rules. How are the regions associated to different rules connected with one another? We address this by computing the Hamming distances between the rule tables of each learned NCA rule, as well as the distances between these rules and those defined by asymmetric majority votes. The results in the bottom panel of Fig.~\ref{fig:mutations} show that the NCA rules lie in a bubble of Hamming radius $\sim50$, and appear to be distinctly separated from the rules based on majority votes. Fleshing out the topology of this landscape is left to future work. 
	
	\paragraph*{Outlook:} The phenomena discovered in this work indicate that the physics of non-equilibrium memories is quite diverse, and far from fully understood. Immediate future research directions include generalizing the NCA framework to $\zn$ and $U(1)$ memories, investigating the phase transitions that occur when the memory is lost, and theoretically characterizing the distinct universality classes of local error-correcting dynamics. 

    While our focus in this work has been entirely on classical memories, machine learning methods can also be used to discover local error-correcting dynamics for {\it quantum} memories. The simplest examples are closely related to one-dimensional classical memories, which remain poorly understood, although in this context, the (reasonable) assumption of noiseless classical error-correcting hardware makes the task easier. Future work will describe this in detail \cite{rlpaper}. 
	
	We close with some rather speculative remarks. The systems studied in this work are capable of ``homeostasis'': the robust protection of (here, a single bit of) information against degradation by noise. It is natural to wonder about mechanisms by which richer behavior, like nontrivial information processing and environmental adaptation, can robustly emerge in locally-interacting systems.
	It is well known that simple CA rules can display remarkably life-like emergent behavior
	\cite{langton1997artificial,chan2018lenia,mordvintsev2020growing},
	but in none of these cases has the problem of noise resilience been carefully addressed, and an outstanding problem for theory is to identify the complexity threshold that must be crossed in order for CA to exhibit complex {\it and  robust} emergent behavior. The constructions of Gacs \cite{gacs2001reliable,gacs1989self}---whose design calls to mind the genetic code of an elaborate artificial organism---provide extremely complicated starting points, but there is clearly much more to be done. We expect the NCA approach developed in the present work to be useful in this regard. 
    
	\paragraph*{Acknowledgments: } 
	
	We thank Shankar Balasubramanian, Sarang Gopalakrishnan, Hsin-Yuan Huang, Vedika Khemani, Yaodong Li, Tibor Rakovszky, and Mike Zaletel for discussions and feedback.

        A.B. is supported by the Kortschak Scholars Program.
	E.L. is supported by a Miller Research Fellowship. 
        N.S. is supported by the National Science Foundation Graduate Research Fellowship Program under Grant No. 2146752. Any opinions, findings, and conclusions or recommendations expressed in this material are those of the authors and do not necessarily reflect the views of the National Science Foundation.

    \paragraph*{Author contributions:} 

    A.B., E.L., and N.S. conceived the idea of using NCA to discover robust memories. A.B., E.P., and N.S. designed and implemented the numerics, E.P. built the interactive demo, and E.L. performed the theoretical calculations. 	
	
	\bibliography{refs}

\begin{thebibliography}{39}%
\makeatletter
\providecommand \@ifxundefined [1]{%
 \@ifx{#1\undefined}
}%
\providecommand \@ifnum [1]{%
 \ifnum #1\expandafter \@firstoftwo
 \else \expandafter \@secondoftwo
 \fi
}%
\providecommand \@ifx [1]{%
 \ifx #1\expandafter \@firstoftwo
 \else \expandafter \@secondoftwo
 \fi
}%
\providecommand \natexlab [1]{#1}%
\providecommand \enquote  [1]{``#1''}%
\providecommand \bibnamefont  [1]{#1}%
\providecommand \bibfnamefont [1]{#1}%
\providecommand \citenamefont [1]{#1}%
\providecommand \href@noop [0]{\@secondoftwo}%
\providecommand \href [0]{\begingroup \@sanitize@url \@href}%
\providecommand \@href[1]{\@@startlink{#1}\@@href}%
\providecommand \@@href[1]{\endgroup#1\@@endlink}%
\providecommand \@sanitize@url [0]{\catcode `\\12\catcode `\$12\catcode `\&12\catcode `\#12\catcode `\^12\catcode `\_12\catcode `\%12\relax}%
\providecommand \@@startlink[1]{}%
\providecommand \@@endlink[0]{}%
\providecommand \url  [0]{\begingroup\@sanitize@url \@url }%
\providecommand \@url [1]{\endgroup\@href {#1}{\urlprefix }}%
\providecommand \urlprefix  [0]{URL }%
\providecommand \Eprint [0]{\href }%
\providecommand \doibase [0]{https://doi.org/}%
\providecommand \selectlanguage [0]{\@gobble}%
\providecommand \bibinfo  [0]{\@secondoftwo}%
\providecommand \bibfield  [0]{\@secondoftwo}%
\providecommand \translation [1]{[#1]}%
\providecommand \BibitemOpen [0]{}%
\providecommand \bibitemStop [0]{}%
\providecommand \bibitemNoStop [0]{.\EOS\space}%
\providecommand \EOS [0]{\spacefactor3000\relax}%
\providecommand \BibitemShut  [1]{\csname bibitem#1\endcsname}%
\let\auto@bib@innerbib\@empty
\bibitem [{\citenamefont {Ponselet}(2013)}]{ponselet2013phase}%
  \BibitemOpen
  \bibfield  {author} {\bibinfo {author} {\bibfnamefont {L.}~\bibnamefont {Ponselet}},\ }\bibfield  {title} {\bibinfo {title} {Phase transitions in probabilistic cellular automata},\ }\href@noop {} {\bibfield  {journal} {\bibinfo  {journal} {arXiv preprint arXiv:1312.3612}\ } (\bibinfo {year} {2013})}\BibitemShut {NoStop}%
\bibitem [{\citenamefont {G{\'a}cs}(2001)}]{gacs2001reliable}%
  \BibitemOpen
  \bibfield  {author} {\bibinfo {author} {\bibfnamefont {P.}~\bibnamefont {G{\'a}cs}},\ }\bibfield  {title} {\bibinfo {title} {Reliable cellular automata with self-organization},\ }\href@noop {} {\bibfield  {journal} {\bibinfo  {journal} {Journal of Statistical Physics}\ }\textbf {\bibinfo {volume} {103}},\ \bibinfo {pages} {45} (\bibinfo {year} {2001})}\BibitemShut {NoStop}%
\bibitem [{\citenamefont {Wagner}(2013)}]{wagner2013robustness}%
  \BibitemOpen
  \bibfield  {author} {\bibinfo {author} {\bibfnamefont {A.}~\bibnamefont {Wagner}},\ }\href@noop {} {\emph {\bibinfo {title} {Robustness and Evolvability in Living Systems}}}\ (\bibinfo  {publisher} {Princeton University Press},\ \bibinfo {address} {Princeton, NJ},\ \bibinfo {year} {2013})\BibitemShut {NoStop}%
\bibitem [{\citenamefont {G{\'a}cs}\ \emph {et~al.}(1978)\citenamefont {G{\'a}cs}, \citenamefont {Kurdyumov},\ and\ \citenamefont {Levin}}]{gacs1978one}%
  \BibitemOpen
  \bibfield  {author} {\bibinfo {author} {\bibfnamefont {P.}~\bibnamefont {G{\'a}cs}}, \bibinfo {author} {\bibfnamefont {G.~L.}\ \bibnamefont {Kurdyumov}},\ and\ \bibinfo {author} {\bibfnamefont {L.~A.}\ \bibnamefont {Levin}},\ }\bibfield  {title} {\bibinfo {title} {One-dimensional uniform arrays that wash out finite islands},\ }\href@noop {} {\bibfield  {journal} {\bibinfo  {journal} {Problemy Peredachi Informatsii}\ }\textbf {\bibinfo {volume} {14}},\ \bibinfo {pages} {92} (\bibinfo {year} {1978})}\BibitemShut {NoStop}%
\bibitem [{\citenamefont {Cirel’son}(2006)}]{cirel2006reliable}%
  \BibitemOpen
  \bibfield  {author} {\bibinfo {author} {\bibfnamefont {B.}~\bibnamefont {Cirel’son}},\ }\bibfield  {title} {\bibinfo {title} {Reliable storage of information in a system of unreliable components with local interactions},\ }in\ \href@noop {} {\emph {\bibinfo {booktitle} {Locally Interacting Systems and Their Application in Biology}}}\ (\bibinfo {organization} {Springer},\ \bibinfo {year} {2006})\ pp.\ \bibinfo {pages} {15--30}\BibitemShut {NoStop}%
\bibitem [{\citenamefont {Toom}(1980)}]{toom1980stable}%
  \BibitemOpen
  \bibfield  {author} {\bibinfo {author} {\bibfnamefont {A.~L.}\ \bibnamefont {Toom}},\ }\bibfield  {title} {\bibinfo {title} {Stable and attractive trajectories in multicomponent systems},\ }\href@noop {} {\bibfield  {journal} {\bibinfo  {journal} {Multicomponent random systems}\ }\textbf {\bibinfo {volume} {6}},\ \bibinfo {pages} {549} (\bibinfo {year} {1980})}\BibitemShut {NoStop}%
\bibitem [{\citenamefont {Harrington}(2004)}]{Harrington2004}%
  \BibitemOpen
  \bibfield  {author} {\bibinfo {author} {\bibfnamefont {J.~W.}\ \bibnamefont {Harrington}},\ }\emph {\bibinfo {title} {Analysis of Quantum Error-Correcting Codes: Symplectic Lattice Codes and Toric Codes}},\ \href@noop {} {Ph.D. thesis},\ \bibinfo  {school} {Caltech} (\bibinfo {year} {2004})\BibitemShut {NoStop}%
\bibitem [{\citenamefont {Mordvintsev}\ \emph {et~al.}(2020)\citenamefont {Mordvintsev}, \citenamefont {Randazzo}, \citenamefont {Niklasson},\ and\ \citenamefont {Levin}}]{mordvintsev2020growing}%
  \BibitemOpen
  \bibfield  {author} {\bibinfo {author} {\bibfnamefont {A.}~\bibnamefont {Mordvintsev}}, \bibinfo {author} {\bibfnamefont {E.}~\bibnamefont {Randazzo}}, \bibinfo {author} {\bibfnamefont {E.}~\bibnamefont {Niklasson}},\ and\ \bibinfo {author} {\bibfnamefont {M.}~\bibnamefont {Levin}},\ }\bibfield  {title} {\bibinfo {title} {Growing neural cellular automata},\ }\href@noop {} {\bibfield  {journal} {\bibinfo  {journal} {Distill}\ }\textbf {\bibinfo {volume} {5}},\ \bibinfo {pages} {e23} (\bibinfo {year} {2020})}\BibitemShut {NoStop}%
\bibitem [{\citenamefont {Rakovszky}\ \emph {et~al.}(2024)\citenamefont {Rakovszky}, \citenamefont {Gopalakrishnan},\ and\ \citenamefont {Von~Keyserlingk}}]{rakovszky2024defining}%
  \BibitemOpen
  \bibfield  {author} {\bibinfo {author} {\bibfnamefont {T.}~\bibnamefont {Rakovszky}}, \bibinfo {author} {\bibfnamefont {S.}~\bibnamefont {Gopalakrishnan}},\ and\ \bibinfo {author} {\bibfnamefont {C.}~\bibnamefont {Von~Keyserlingk}},\ }\bibfield  {title} {\bibinfo {title} {Defining stable phases of open quantum systems},\ }\href@noop {} {\bibfield  {journal} {\bibinfo  {journal} {Physical Review X}\ }\textbf {\bibinfo {volume} {14}},\ \bibinfo {pages} {041031} (\bibinfo {year} {2024})}\BibitemShut {NoStop}%
\bibitem [{\citenamefont {Ray}\ \emph {et~al.}(2024)\citenamefont {Ray}, \citenamefont {Laflamme},\ and\ \citenamefont {Kubica}}]{ray2024protecting}%
  \BibitemOpen
  \bibfield  {author} {\bibinfo {author} {\bibfnamefont {A.}~\bibnamefont {Ray}}, \bibinfo {author} {\bibfnamefont {R.}~\bibnamefont {Laflamme}},\ and\ \bibinfo {author} {\bibfnamefont {A.}~\bibnamefont {Kubica}},\ }\bibfield  {title} {\bibinfo {title} {Protecting information via probabilistic cellular automata},\ }\href@noop {} {\bibfield  {journal} {\bibinfo  {journal} {Physical Review E}\ }\textbf {\bibinfo {volume} {109}},\ \bibinfo {pages} {044141} (\bibinfo {year} {2024})}\BibitemShut {NoStop}%
\bibitem [{\citenamefont {Balasubramanian}\ \emph {et~al.}(2024)\citenamefont {Balasubramanian}, \citenamefont {Davydova},\ and\ \citenamefont {Lake}}]{balasubramanian2024local}%
  \BibitemOpen
  \bibfield  {author} {\bibinfo {author} {\bibfnamefont {S.}~\bibnamefont {Balasubramanian}}, \bibinfo {author} {\bibfnamefont {M.}~\bibnamefont {Davydova}},\ and\ \bibinfo {author} {\bibfnamefont {E.}~\bibnamefont {Lake}},\ }\bibfield  {title} {\bibinfo {title} {A local automaton for the 2d toric code},\ }\href@noop {} {\bibfield  {journal} {\bibinfo  {journal} {arXiv preprint arXiv:2412.19803}\ } (\bibinfo {year} {2024})}\BibitemShut {NoStop}%
\bibitem [{\citenamefont {Paletta}\ \emph {et~al.}(2025)\citenamefont {Paletta}, \citenamefont {Leverrier}, \citenamefont {Mirrahimi},\ and\ \citenamefont {Vuillot}}]{paletta2025high}%
  \BibitemOpen
  \bibfield  {author} {\bibinfo {author} {\bibfnamefont {L.}~\bibnamefont {Paletta}}, \bibinfo {author} {\bibfnamefont {A.}~\bibnamefont {Leverrier}}, \bibinfo {author} {\bibfnamefont {M.}~\bibnamefont {Mirrahimi}},\ and\ \bibinfo {author} {\bibfnamefont {C.}~\bibnamefont {Vuillot}},\ }\bibfield  {title} {\bibinfo {title} {High-performance local automaton decoders for defect matching in 1d},\ }\href@noop {} {\bibfield  {journal} {\bibinfo  {journal} {arXiv preprint arXiv:2505.10162}\ } (\bibinfo {year} {2025})}\BibitemShut {NoStop}%
\bibitem [{\citenamefont {Lake}(2025)}]{lake2025fast}%
  \BibitemOpen
  \bibfield  {author} {\bibinfo {author} {\bibfnamefont {E.}~\bibnamefont {Lake}},\ }\bibfield  {title} {\bibinfo {title} {Fast offline decoding with local message-passing automata},\ }\href@noop {} {\bibfield  {journal} {\bibinfo  {journal} {arXiv preprint arXiv:2506.03266}\ } (\bibinfo {year} {2025})}\BibitemShut {NoStop}%
\bibitem [{Note1()}]{Note1}%
  \BibitemOpen
  \bibinfo {note} {An example of dynamics which does not erase minority domains quickly enough is Glauber dynamics for the 2D Ising model at $T< T_c$. A domain of size $R$ is corrected only in time $\sim R^2$, which leads to the memory being destroyed in the presence of biased noise (or essentially equivalently, in the presence of a small magnetic field).}\BibitemShut {Stop}%
\bibitem [{\citenamefont {Machado}\ \emph {et~al.}(2023)\citenamefont {Machado}, \citenamefont {Zhuang}, \citenamefont {Yao},\ and\ \citenamefont {Zaletel}}]{machado2023absolutely}%
  \BibitemOpen
  \bibfield  {author} {\bibinfo {author} {\bibfnamefont {F.}~\bibnamefont {Machado}}, \bibinfo {author} {\bibfnamefont {Q.}~\bibnamefont {Zhuang}}, \bibinfo {author} {\bibfnamefont {N.~Y.}\ \bibnamefont {Yao}},\ and\ \bibinfo {author} {\bibfnamefont {M.~P.}\ \bibnamefont {Zaletel}},\ }\bibfield  {title} {\bibinfo {title} {Absolutely stable time crystals at finite temperature},\ }\href@noop {} {\bibfield  {journal} {\bibinfo  {journal} {Physical Review Letters}\ }\textbf {\bibinfo {volume} {131}},\ \bibinfo {pages} {180402} (\bibinfo {year} {2023})}\BibitemShut {NoStop}%
\bibitem [{\citenamefont {McGinley}\ \emph {et~al.}(2022)\citenamefont {McGinley}, \citenamefont {Roy},\ and\ \citenamefont {Parameswaran}}]{mcginley2022absolutely}%
  \BibitemOpen
  \bibfield  {author} {\bibinfo {author} {\bibfnamefont {M.}~\bibnamefont {McGinley}}, \bibinfo {author} {\bibfnamefont {S.}~\bibnamefont {Roy}},\ and\ \bibinfo {author} {\bibfnamefont {S.}~\bibnamefont {Parameswaran}},\ }\bibfield  {title} {\bibinfo {title} {Absolutely stable spatiotemporal order in noisy quantum systems},\ }\href@noop {} {\bibfield  {journal} {\bibinfo  {journal} {Physical Review Letters}\ }\textbf {\bibinfo {volume} {129}},\ \bibinfo {pages} {090404} (\bibinfo {year} {2022})}\BibitemShut {NoStop}%
\bibitem [{Note2()}]{Note2}%
  \BibitemOpen
  \bibinfo {note} {A function $f(x) \protect \, : \protect \, \{-1,1\}^n \rightarrow \{-1,1\}$ is monotonic if increasing the number of 1s in the input cannot change an output of 1 to an output of $-1$.}\BibitemShut {Stop}%
\bibitem [{\citenamefont {Lake}\ \emph {et~al.}(2025)\citenamefont {Lake} \emph {et~al.}}]{squeezing}%
  \BibitemOpen
  \bibfield  {author} {\bibinfo {author} {\bibfnamefont {E.}~\bibnamefont {Lake}} \emph {et~al.},\ }\bibfield  {title} {\bibinfo {title} {Squeezing codes},\ }\href@noop {} {\bibfield  {journal} {\bibinfo  {journal} {To appear}\ } (\bibinfo {year} {2025})}\BibitemShut {NoStop}%
\bibitem [{Note3()}]{Note3}%
  \BibitemOpen
  \bibinfo {note} {This is due to the fact that for all of the memories considered in the main text, flipping $\Theta (2L)$ spins along two non-contractible loops that span each cycle of the torus is sufficient to cause a magnetization flip.}\BibitemShut {Stop}%
\bibitem [{\citenamefont {Bennett}\ and\ \citenamefont {Grinstein}(1985)}]{bennett1985role}%
  \BibitemOpen
  \bibfield  {author} {\bibinfo {author} {\bibfnamefont {C.~H.}\ \bibnamefont {Bennett}}\ and\ \bibinfo {author} {\bibfnamefont {G.}~\bibnamefont {Grinstein}},\ }\bibfield  {title} {\bibinfo {title} {Role of irreversibility in stabilizing complex and nonergodic behavior in locally interacting discrete systems},\ }\href@noop {} {\bibfield  {journal} {\bibinfo  {journal} {Physical review letters}\ }\textbf {\bibinfo {volume} {55}},\ \bibinfo {pages} {657} (\bibinfo {year} {1985})}\BibitemShut {NoStop}%
\bibitem [{Note4()}]{Note4}%
  \BibitemOpen
  \bibinfo {note} {$\protect \mathcal {A}$ is symmetric under spin-flip if $\protect \mathcal {A}(-s) = -\protect \mathcal {A}(s)$ for all input states $s$.}\BibitemShut {Stop}%
\bibitem [{\citenamefont {Gacs}(2017)}]{gacs_eroder_slides}%
  \BibitemOpen
  \bibfield  {author} {\bibinfo {author} {\bibfnamefont {P.}~\bibnamefont {Gacs}},\ }\href {https://cs-web.bu.edu/faculty/gacs/papers/eroders-talk-BU-2017f.pdf} {\bibinfo {title} {Eroders}} (\bibinfo {year} {2017})\BibitemShut {NoStop}%
\bibitem [{\citenamefont {Richardson}(1997)}]{richardson1997pomeranchuk}%
  \BibitemOpen
  \bibfield  {author} {\bibinfo {author} {\bibfnamefont {R.~C.}\ \bibnamefont {Richardson}},\ }\bibfield  {title} {\bibinfo {title} {The pomeranchuk effect},\ }\href@noop {} {\bibfield  {journal} {\bibinfo  {journal} {Reviews of Modern Physics}\ }\textbf {\bibinfo {volume} {69}},\ \bibinfo {pages} {683} (\bibinfo {year} {1997})}\BibitemShut {NoStop}%
\bibitem [{\citenamefont {Poulin}\ \emph {et~al.}(2019)\citenamefont {Poulin}, \citenamefont {Melko},\ and\ \citenamefont {Hastings}}]{poulin2019self}%
  \BibitemOpen
  \bibfield  {author} {\bibinfo {author} {\bibfnamefont {D.}~\bibnamefont {Poulin}}, \bibinfo {author} {\bibfnamefont {R.~G.}\ \bibnamefont {Melko}},\ and\ \bibinfo {author} {\bibfnamefont {M.~B.}\ \bibnamefont {Hastings}},\ }\bibfield  {title} {\bibinfo {title} {Self-correction in wegner's three-dimensional ising lattice gauge theory},\ }\href@noop {} {\bibfield  {journal} {\bibinfo  {journal} {Physical Review B}\ }\textbf {\bibinfo {volume} {99}},\ \bibinfo {pages} {094103} (\bibinfo {year} {2019})}\BibitemShut {NoStop}%
\bibitem [{\citenamefont {Marsan}\ \emph {et~al.}(2025)\citenamefont {Marsan}, \citenamefont {Sablik},\ and\ \citenamefont {T{\"o}rm{\"a}}}]{marsan2025perturbed}%
  \BibitemOpen
  \bibfield  {author} {\bibinfo {author} {\bibfnamefont {H.}~\bibnamefont {Marsan}}, \bibinfo {author} {\bibfnamefont {M.}~\bibnamefont {Sablik}},\ and\ \bibinfo {author} {\bibfnamefont {I.}~\bibnamefont {T{\"o}rm{\"a}}},\ }\bibfield  {title} {\bibinfo {title} {A perturbed cellular automaton with two phase transitions for the ergodicity},\ }\href@noop {} {\bibfield  {journal} {\bibinfo  {journal} {arXiv preprint arXiv:2507.03485}\ } (\bibinfo {year} {2025})}\BibitemShut {NoStop}%
\bibitem [{Note5()}]{Note5}%
  \BibitemOpen
  \bibinfo {note} {This phenomenon might be called ``order by disorder,'' but is quite distinct from what is usually meant by this phrase, which conventionally refers to situations where (fine-tuned) systems with many degenerate ground states have their degeneracy lifted by fluctuations. Conventional ``order by disorder'' {\protect \it decreases} the number of degenerate states, and thus in some sense actually reduces the amount of order. By contrast, the dynamics considered in this work uses fluctuations to {\protect \it increase} the number of steady states, thereby increasing the amount of order.}\BibitemShut {Stop}%
\bibitem [{\citenamefont {Bersini}\ and\ \citenamefont {Detours}(1994)}]{bersini1994asynchrony}%
  \BibitemOpen
  \bibfield  {author} {\bibinfo {author} {\bibfnamefont {H.}~\bibnamefont {Bersini}}\ and\ \bibinfo {author} {\bibfnamefont {V.}~\bibnamefont {Detours}},\ }\bibfield  {title} {\bibinfo {title} {Asynchrony induces stability in},\ }in\ \href@noop {} {\emph {\bibinfo {booktitle} {Artificial Life IV: Proceedings of the Fourth International Workshop on the Synthesis and Simulation of Living Systems}}},\ Vol.~\bibinfo {volume} {4}\ (\bibinfo {organization} {MIT Press},\ \bibinfo {year} {1994})\ p.\ \bibinfo {pages} {382}\BibitemShut {NoStop}%
\bibitem [{\citenamefont {Fat{\`e}s}(2007)}]{fates2007asynchronism}%
  \BibitemOpen
  \bibfield  {author} {\bibinfo {author} {\bibfnamefont {N.~A.}\ \bibnamefont {Fat{\`e}s}},\ }\bibfield  {title} {\bibinfo {title} {Asynchronism induces second order phase transitions in elementary cellular automata},\ }\href@noop {} {\bibfield  {journal} {\bibinfo  {journal} {arXiv preprint nlin/0703044}\ } (\bibinfo {year} {2007})}\BibitemShut {NoStop}%
\bibitem [{\citenamefont {Bhardwaj}\ \emph {et~al.}(2025)\citenamefont {Bhardwaj} \emph {et~al.}}]{rlpaper}%
  \BibitemOpen
  \bibfield  {author} {\bibinfo {author} {\bibfnamefont {A.}~\bibnamefont {Bhardwaj}} \emph {et~al.},\ }\bibfield  {title} {\bibinfo {title} {Learning cellular automaton decoders for quantum memories},\ }\href@noop {} {\bibfield  {journal} {\bibinfo  {journal} {To appear}\ } (\bibinfo {year} {2025})}\BibitemShut {NoStop}%
\bibitem [{\citenamefont {Langton}(1997)}]{langton1997artificial}%
  \BibitemOpen
  \bibinfo {editor} {\bibfnamefont {C.~G.}\ \bibnamefont {Langton}},\ ed.,\ \href@noop {} {\emph {\bibinfo {title} {Artificial Life: An Overview}}}\ (\bibinfo  {publisher} {MIT Press},\ \bibinfo {address} {Cambridge, MA},\ \bibinfo {year} {1997})\BibitemShut {NoStop}%
\bibitem [{\citenamefont {Chan}(2018)}]{chan2018lenia}%
  \BibitemOpen
  \bibfield  {author} {\bibinfo {author} {\bibfnamefont {B.~W.-C.}\ \bibnamefont {Chan}},\ }\bibfield  {title} {\bibinfo {title} {Lenia-biology of artificial life},\ }\href@noop {} {\bibfield  {journal} {\bibinfo  {journal} {arXiv preprint arXiv:1812.05433}\ } (\bibinfo {year} {2018})}\BibitemShut {NoStop}%
\bibitem [{\citenamefont {G{\'a}cs}(1989)}]{gacs1989self}%
  \BibitemOpen
  \bibfield  {author} {\bibinfo {author} {\bibfnamefont {P.}~\bibnamefont {G{\'a}cs}},\ }\bibfield  {title} {\bibinfo {title} {Self-correcting two-dimensional arrays.},\ }\href@noop {} {\bibfield  {journal} {\bibinfo  {journal} {Adv. Comput. Res.}\ }\textbf {\bibinfo {volume} {5}},\ \bibinfo {pages} {223} (\bibinfo {year} {1989})}\BibitemShut {NoStop}%
\bibitem [{\citenamefont {Pajouheshgar}\ \emph {et~al.}(2024)\citenamefont {Pajouheshgar}, \citenamefont {Xu},\ and\ \citenamefont {Süsstrunk}}]{noisenca}%
  \BibitemOpen
  \bibfield  {author} {\bibinfo {author} {\bibfnamefont {E.}~\bibnamefont {Pajouheshgar}}, \bibinfo {author} {\bibfnamefont {Y.}~\bibnamefont {Xu}},\ and\ \bibinfo {author} {\bibfnamefont {S.}~\bibnamefont {Süsstrunk}},\ }\bibfield  {title} {\bibinfo {title} {Noisenca: Noisy seed improves spatio-temporal continuity of neural cellular automata}\ }(\bibinfo {year} {2024})\ p.~\bibinfo {pages} {57},\ \Eprint {https://arxiv.org/abs/https://direct.mit.edu/isal/proceedings-pdf/isal2024/36/57/2461193/isal\_a\_00785.pdf} {https://direct.mit.edu/isal/proceedings-pdf/isal2024/36/57/2461193/isal\_a\_00785.pdf} \BibitemShut {NoStop}%
\bibitem [{Note6()}]{Note6}%
  \BibitemOpen
  \bibinfo {note} {The use of the ReLU nonlinearity does not bias the learned rules towards positive spins, since the weights of $W_1,W_2$ can have both positive and negative signs (and are initialized randomly from a distribution with zero mean). Indeed, out of the learned rules which function as robust memories, we observe no statistically significant bias between positive and negative spins.}\BibitemShut {Stop}%
\bibitem [{Note7()}]{Note7}%
  \BibitemOpen
  \bibinfo {note} {To avoid getting bogged down in details, we will not give a very explicit description of the class of allowed noise models; the ones relevant for the present work are known as ``$p$-bounded error models.'' In the present context, it suffices to let $\protect \mathcal {A}_p$ implement $\protect \mathcal {A}$ with probability $1-p$, and to flip the spin in question to $0$ or $1$ with probability $(1\pm \eta )/2$.}\BibitemShut {Stop}%
\bibitem [{Note8()}]{Note8}%
  \BibitemOpen
  \bibinfo {note} {$\protect \mathsf {N}_1(\protect \mathbf {r})$ is known as the {\protect \it Moore neighborhood} of $\protect \mathbf {r}$.}\BibitemShut {Stop}%
\bibitem [{Note9()}]{Note9}%
  \BibitemOpen
  \bibinfo {note} {This shift is innocuous for synchronous updates only. Composing with a translation can drastically modify the behavior of a CA if the updates are performed asynchronously.}\BibitemShut {Stop}%
\bibitem [{Note10()}]{Note10}%
  \BibitemOpen
  \bibinfo {note} {A proof in terms of minimal zero sets is again possible, but we will adopt a more elementary approach.}\BibitemShut {Stop}%
\bibitem [{Note11()}]{Note11}%
  \BibitemOpen
  \bibinfo {note} {That $\gamma _a^\pm $ domain walls cannot move normal to ${\protect \mathbf {\protect \hat a}}$ can be seen by contradiction. Suppose such domain walls could move. One may verify by inspection that this would only be possible if $\protect \mathcal {A}$ performed a majority vote along a single row or column of the Moore neighborhood $\protect \mathsf {N}_1$. Such a situation is ruled out by our assumption of broken $C_\pi $ symmetry.}\BibitemShut {Stop}%
\end{thebibliography}%
		
	\onecolumngrid
	\begin{center}
    \bigskip 
    \bigskip 
		\textbf{\large Exploring the Landscape of Non-Equilibrium Memories with Neural Cellular Automata: Supplementary information}\\[.2cm]
	\end{center}
	
	\setcounter{equation}{0}
	\setcounter{figure}{0}
	\setcounter{table}{0}
	\setcounter{page}{1}
	\renewcommand{\theequation}{S\arabic{equation}}
	\renewcommand{\thefigure}{S\arabic{figure}}
	\renewcommand{\bibnumfmt}[1]{[S#1]}
	\renewcommand{\citenumfont}[1]{S#1}
	
	\ss*{Contents} 
	\begin{itemize}
		\item Section~\ref{app:otherrules}: provides explicit CA rule tables for the rules studied in the main text. 
		\item Section~\ref{app:nca}: describes in detail the neural cellular automaton architecture and training process. 
		\item Section~\ref{app:majproof}: proves theorem~\ref{thm:majtheorem} from the main text. 
		\item Section~\ref{app:meanfield}: provides details of the mean-field analysis quoted in the main text.  
	\end{itemize}
	
	\bs

	\section{More information about the rule landscape}\label{app:otherrules}
	
	In this short appendix, we provide the explicit rule strings for the CA rules studied in the main text, as well as a few other rules of interest. 
	Each deterministic rule may be parametrized by a binary string of length $2^9 = 512$. Let $w$ denote a particular configuration of a $3\times 3$ block of spins, with coordinates labeled by $\{(x,y) \in \{-1,0,1\}^2\}$ and each $w_{(x,y)} \in \{-1,1\}$. We use the convention where the $n_w$th digit of $\mca$'s rule string is equal to $\frac{1+\mca(w)}2$, where the number $n_w\in\{1,\dots,512\}$ is defined as 
	\be  \label{rulestring} n_w = 1+\sum_{x=-1}^1 \sum_{y=-1}^1 \frac{1+w_{(x,y)}}2 2^{3(x+1) + (y+1)}.\ee 
	The rule strings for the rules studied in the main text, as well as a few other interesting examples, are provided in  table~\ref{tab:rule_landscape}.

	\section{Neural Cellular Automata: Architecture and Training}\label{app:nca}
	
	In this appendix, we describe in more detail the neural cellular automaton (NCA) framework discussed in the main text. 
	
	An NCA operates on lattices of generalized spins, where each generalized spin contains \texttt{c} components, which we will refer to as ``channels.'' The first ``visible'' channel encodes the logical information that the NCA attempts to preserve, while the remaining ``hidden'' channels are used to facilitate the preservation of this information. Each channel is represented by a real number during training, but is projected onto a discrete spin valued in $\pm 1$ after training is completed. A discrete spin value of $+1$ is encoded as the real number $+\Delta$ while $-1$ is encoded as $-\Delta$, with $\Delta=0.25$ in our experiments. 
    The role of $\Delta$ is to define the target states that the NCA must preserve, and the memory loss is computed as a combination of L1 and L2 distances from the visible channel to either $+\Delta$ or $-\Delta$, depending on the stored target spin. 
	
	\subsection{Architecture}
	We now discuss the NCA architecture in more detail.
	Classical NCAs employ convolutional filters such as the Sobel filter and the Laplacian to extract information from the neighboring cells. This way, the learned dynamics resemble partial differential equations acting on smooth spatial fields \cite{noisenca}. In our setting, however, both the internal state and space 
	are discrete in nature (spins), so we dispense with gradient-like filters and instead directly feed the raw Moore neighborhood (the surrounding $3\times3$ block of sites) as input to the network.
	
	Let $s^t \in \mathbb{R}^{\texttt{c} \times H \times W}$ denote the state of a system of height $H$ and width $W$ at time $t$. For each site $r$ (dropping boldface notation for vectors in this appendix), the NCA decides the the next state based on the the perception vector $z_r \in \mathbb{R}^{9\texttt{c}}$ which is acquired by concatenating all channel values in the $3 \times 3$ block centered at $r$. The NCA update at site $r$ is then calculated by applying two learned linear maps with a ReLU nonlinearity \footnote{The use of the ReLU nonlinearity does not bias the learned rules towards positive spins, since the weights of $W_1,W_2$ can have both positive and negative signs (and are initialized randomly from a distribution with zero mean). Indeed, out of the learned rules which function as robust memories, we observe no statistically significant bias between positive and negative spins.}:
	$$s^{t+1}_r = W_2 \sigma(W_1 z_r)$$
	where $\theta = [W_1, W_2]$ denote the trainable NCA parameters of the linear maps, and $\sigma(x) = \max(0,x)$ is the ReLU non-linear function.
	Unlike residual NCAs, where the network output is added to the current state, here the network output directly becomes the next state of each cell.
	For evaluating the model in binary setting, after each update, every positive value of the continuous state is mapped to $+\Delta$ and every negative value is mapped to $-\Delta$.
	
	\subsection{Training}
	We are interested in finding NCA rules that achieve robust memory under noisy dynamics. The noisy update rule is governed by three parameters: the noise probability $p$, the noise bias $\eta$, and the stochastic update rate $\alpha$. At each time step, every site is selected for update with probability $\alpha$. If selected, the site state is replaced either by the output of the NCA (with probability $1-p$) or by a random noise value (with probability $p$). The noise is drawn from $\{-\Delta, +\Delta\}$, where $\eta$ controls the bias towards $+\Delta$ or $-\Delta$. Specifically, $\eta=1$ means all noise is $+\Delta$, $\eta=-1$ means all noise is $-\Delta$, and $\eta=0$ produces unbiased noise. For NCAs with more than one channel, noise applied to site $r$ is defined to randomize the values of all channels at $r$. 
	
	In our experiments, we set $\alpha = 1$ to perform synchronous updates, which we use exclusively during training. For each batch, $\eta$ is sampled independently for every batch element from the uniform distribution $\mathcal{U}(-1, 1)$, while $p$ is sampled once per batch from $\mathcal{U}(0, p_{\mathrm{max}})$ with a fixed upper bound $p_{\mathrm{max}}$. The NCA is trained directly on continuous state values, with the binary mode used only for evaluation (as in the main text). We also find that setting $\mathtt{c}=1$---that is, dispensing with the hidden channels altogether---still allows robust memories to be readily found in training (although including hidden channels should allow for memories which persist to higher noise strengths). As such, we take $\mathtt{c}=1$ in our experiments. 
    
    The detailed training process is described in Algorithm~\ref{alg:nca-training}.

\begin{figure*}[t] 
	\begingroup
	\renewcommand{\figurename}{Algorithm} 
    \algorule
	\caption{	\label{alg:nca-training} Training procedure for Memory NCA}
    \algorule
	\endgroup
	
		\begin{algorithmic}[1]
			\State \textbf{Initialize Model:} $\mathrm{NCA}_\theta$ with \texttt{c} channels and parameters $\theta$.
			\State \textbf{Initialize optimizer: } Adam optimizer \texttt{optim} with learning rate \texttt{lr}.
			\State \textbf{Initialize pool:} $\{s_j, t_j\}_{j=1}^M$ with grid size $H \times W$. 
			Each $s_j$ is seeded with uniform spin states ($+\Delta$ or $-\Delta$) and large error patches (inverted spins).  
			Each $t_j$ is the corresponding target spin configuration ($\pm \Delta$).
			\For{epoch $= 1$ to $N_{\mathrm{epochs}}$}
			\State Randomly sample batch of size $B$ from the pool $\{s, t\}$.
			\State Reinitialize one batch element with a fresh seed-target pair.
			\State For each batch element, sample $\eta \sim \mathcal{U}(-1, 1)$.
			\State Sample $p \sim \mathcal{U}(0, p_{\max})$ (shared across batch).
			\State Sample roll-out length $K \sim \mathrm{Uniform}\{K_{\min}, K_{\max}\}$.
			\For{$k = 1$ to $K$}
			\State $s \gets \mathrm{Simulate}_{[p, \eta, \alpha]}(\mathrm{NCA}_\theta, s)$ \text{\texttt{\quad\quad     \scriptsize{noisy evolution of the state with NCA updates}}}
			\EndFor
			\State $L_{\mathrm{overflow}} = \ell_{1,2}\big(s, \mathrm{clip}(s, [-1, 1])\big)$ \text{\texttt{\quad\quad \scriptsize{overflow loss}}}
			\State $L_{\mathrm{memory}} = \ell_{1,2}\big(s, t\big)$ \text{\texttt{\;\;\quad\quad\quad\quad\quad\quad\quad \scriptsize{memory preservation loss}}}
			\State $L = L_{\mathrm{overflow}} + L_{\mathrm{memory}}$ \text{\texttt{\;\;\;\quad\quad\quad\quad\quad \scriptsize{total loss}}}
			\State $\theta \gets \texttt{optim}(\nabla_{\theta} L, \texttt{lr})$ 
			\text{\texttt{\;\quad\quad\quad\quad\quad\quad\quad \scriptsize{backpropagation followed by stochastic gradient descent}}}
			\State Replace failed cases in the batch (incorrect final sign) with new seeds.
			\State Write updated batch back to the pool.
			\EndFor
		\end{algorithmic}
        \algorule
\end{figure*}

	All of our experiments used to discover memory CA rules use the following settings:
	\begin{itemize}
		\item Spin encoding magnitude: $\Delta = 0.25$
		\item Number of channels: $\texttt{c} = 1$
		\item Fully-connected hidden neurons of NCA: 128
		\item Grid size: $H \times W = 64 \times 64$
		\item Pool size: 512 
		\item Update probability: $\alpha = 1.0$ (synchronous updates)
		\item Maximum noise probability: $p_{\mathrm{max}} = 0.05$
		\item Learning rate: $10^{-3}$ with decay at epochs $[2000, 4000, 6000]$ by a factor $0.3$
		\item Roll-out length: $K_{\min}, K_{\max} = 1, 8$
		\item Batch size: 8
		\item Training epochs: 8000
	\end{itemize}

	Once the NCA model is trained, we need to convert it into a binary CA that operates on the discrete state space $\{-1, 1\}$.
	To derive the deterministic binary cellular automaton corresponding to a trained NCA, we exhaustively simulate its response to all possible binary Moore neighborhoods with \texttt{c} channels.  
	There are $2^{9\texttt{c}}$ such distinct configurations.  
	Each configuration $b \in \{-1, +1\}^{9\texttt{c}}$ is first encoded by mapping $-1 \mapsto -\Delta$ and $+1 \mapsto +\Delta$, then updated using a single NCA step with $\alpha = 1$ and $p = 0$.  
	The sign of the center cell in the updated state is taken as the binary output for that configuration.  
	Collecting these outputs in lexicographic order produces a binary string of length $2^{9\texttt{c}}$, which fully specifies the induced deterministic binary rule. 
	In effect, training the NCA serves as a guided search through the vast space of possible binary CA rules, whose size is $2^{2^{9\texttt{c}}}$.

	In total, we trained $1000$ independent NCA models, each initialized with a different random seed, resulting in variations due to both random weight initialization and stochastic noise realizations during training.  
	Among these, only $37$ models produced binary CAs that functioned as robust memories after conversion.  
	This relatively low success rate suggests that our current training procedure is far from optimal.  
	We expect that substantial gains could be achieved by refining the loss function, exploring alternative training strategies, and tuning hyperparameters to increase the probability of converging to NCAs whose induced binary CAs exhibit robust memory behavior.

	\begin{table}[htbp]
		\centering
		\caption{Details of the CA rules studied in the main text, along with a few other representative examples. The convention for determining the rule string is given in \eqref{rulestring}. }
		\label{tab:rule_landscape}
		\resizebox{\textwidth}{!}{
			\begin{tabular}{p{0.12\textwidth} p{.48\tw} p{0.25\textwidth}}
				\toprule
				rule name & {\rm rule string} & comments \\
				\midrule
				{\sf A} automaton & \tiny\seqsplit{00000000000000110000000000000011000000110011111100000011001111110000000000000011000000000000001100000011001111110000001100111111000000110011111100000011001111110011111111111111001111111111111100000011001111110000001100111111001111111111111100111111111111110000000000000011000000000000001100000011001111110000001100111111000000000000001100000000000000110000001100111111000000110011111100000011001111110000001100111111001111111111111100111111111111110000001100111111000000110011111100111111111111110011111111111111} & majority vote on five spins; minority domains eroded into pentagonal shapes; $\zt$ spin-flip and $x\lra y$ reflection symmetries \\ 
				\midrule 
				{\sf B} automaton & \tiny\seqsplit{00000000000000000000000000000001000000010000000100000001000101110000000100000001000000010001011100010111000101110001011101111111000000010000000100000001000101110001011100010111000101110111111100010111000101110001011101111111011111110111111101111111111111110000000000000001000000010001011100000001000101110001011101111111000000010001011100010111011111110001011101111111011111111111111100000001000000010001011100010111000101110001011101111111011111110001011100010111011111110111111101111111011111111111111111111111} &  minority domains eroded into moving triangle shapes \\ \midrule 
				{\sf C} automaton & \tiny\seqsplit{00000001000000010000000100000101000000010001010100000001000101010000000100000001000000010001011100000001000101110001011100111111000000010000000100000001000101110000000100000101000001010001111100000001000100110001011100011111000101110011111101111111111111110000000100000101000000010001011100000001000101010000010100011111000000010001011100010111001111110001111101111111111111111111111100000001000101110000111111111111000101010111111101111111111111110001111101111111111111111111111111111111111111111111111111111111} &  minority domains eroded by squeezing along oblique axis \\ 
				\midrule 
				{\sf D} automaton & \tiny\seqsplit{00000000000000000000000000000001000000000000000000000001000101110000000000000001000000010001011100000001000001010001011101111111000000000000000000000001000101110000000000000001000101110111111100000001000001110001011101111111000001110001111101111111111111110000000100000101000000010001011100000101000101110000010100010111000101110101111100010111011111110101011101111111010101110111111100000101000101110001010101010111000101110111111100010111011111110101111101111111010101110111111101111111111111110111111111111111} & exponentially many steady states at $p=0$; noise-stabilized order. \\ \midrule 
				destabilized Toom & \tiny\seqsplit{10000000000000000011001100110011001100110011001111111111111111110000000000000000001100110011001100110011001100111111111111111111000000000000000000110011001100110011001100110011111111111111111100000000000000000011001100110011001100110011001111111111111111110000000000000000001100110011001100110011001100111111111111111111000000000000000000110011001100110011001100110011111111111111111100000000000000000011001100110011001100110011001111111111111111110000000000000000001100110011001100110011001100111111111111111110} & no uniform absorbing states; robust for both synchronous and asynchronous updates \\ 
				\midrule  
				shifted Toom & \tiny\seqsplit{00000000000000000000000000000000000000000000000000000000000000000000000011111111000000001111111100000000111111110000000011111111000000001111111100000000111111110000000011111111000000001111111111111111111111111111111111111111111111111111111111111111111111110000000000000000000000000000000000000000000000000000000000000000000000001111111100000000111111110000000011111111000000001111111100000000111111110000000011111111000000001111111100000000111111111111111111111111111111111111111111111111111111111111111111111111} & Toom's rule with the voting region translated by $-\uvx-\uvy$. Equivalent to Toom's rule for sync updates; under async updates does not linearly erode domains with biased noise \\ 
				\midrule 
				grower & \tiny\seqsplit{01101001011000010000000100000101000000010001010100000001000101010000000100000001000000010001011100000001000101110001011100111111000000010000000100000001000101110000000100000101000001010001111100000001000100110001011100011111000101110011111101111111111111110000000100000101000000010001011100000001000101010000010100011111000000010001011100010111001111110001111101111111111111111111111100000001000101110000111111111111000101010111111101111111111111110001111101111111111111111111111111111111111111111111111111111111} & two uniform absorbing states; 0 domains send out small propagating domains of 1s \\ 
				\bottomrule
			\end{tabular}
		}
		
	\end{table} 
	
	\section{Thresholds under asymmetric majority voting}\label{app:majproof}
	
	In this appendix, we prove Theorem~\ref{thm:majtheorem} from the main text. Before we begin, a remark on notation. In this appendix, we will work with spins valued in $\{0,1\}$, and will think of spins as Boolean variables, with 0 associated to \texttt{false} and 1 to \texttt{true}. As in the main text, we will work with a general $2$-dimensional CA rule $\mca$ which acts on a spin at site $\bfr$ in a way determined by the spins on a (in this appendix, potentially arbitrarily-shaped) neighborhood $\mcr_\bfr$ of $\bfr$ (we will just write $\mcr$ when we just want to indicate a general neighborhood without reference to $\bfr$). $\S_\mcr \equiv \{0,1\}^{|\mcr|}$ will be used to denote the set of spin configurations on $\mcr$.  
	
	Describing the proof of Theorem~\ref{thm:majtheorem} will require a few definitions. The first is a property of Boolean functions that is useful for proving the stability of certain kinds of cellular automata: 
	
	\begin{definition}[monotonicity]
		A CA rule $\mca \, :\, \S_\mcr \ra \{0,1\}$ is {\it monotonic} if increasing the number of 1s in the input cannot decrease the number of 1s in the output: 
		\be x \prec y \implies \mca(x) \leq \mca(y) \quad \forall \,\, x,y \in \S_\mcr, \ee 
		where $x\prec y $ means $x_i \leq y_i \, \, \forall \,\, i \in \{1,\dots,\abs{\mcr}\}$. 
	\end{definition}
	
	We will also need the notion of the dual of a  CA rule: 
	\begin{definition}[dual CA]
		For a CA rule $\mca \, : \, \S_\mcr \ra \{0,1\}$, the {\it dual rule} $\mca^\neg$ is defined by negating $\mca$'s rule table: 
		\be \mca^\neg(x) = \neg \mca(\neg x) \quad \forall \, \, x \in \S_\mcr.\ee 
		We will say $\mca$ is self-dual if $\mca^\neg = \mca$. 
	\end{definition}
	One may verify that $\mca^\neg$ is monotonic if $\mca$ is. 
	We next need the definition of an eroder: 
	\begin{definition}[eroder]
		Consider any input state $s$ on an infinite square grid which differs from the all-0s state only in a finite region $\mcb$: 
		\be s_\bfr = 0 \, \, \forall \, \, \bfr \in  \zz^2 \setminus \mcb.\ee 
		Then $\mca$ is an eroder if there exists a finite constant $a$ (in general dependent on $|\mcb|<\infty$) such that $\mca^t(s)$ is the all-0s state for all $t \geq a$. 
	\end{definition}
	For such a state $s$, the {\it damage} ${\sf dam}(s)$ of $s$ will refer to the set where $s$ is nonzero: 
	\be \dam(s) = \{ \bfr \, : \, s_\bfr = 1 \}.\ee 
	An eroder is thus a CA which is guaranteed to erase a finite amount of damage in a finite time, so that there exists a constant $t$ such that $\dam(\mca^t(s)) = \emptyset$. 
	
	Finally, we need a definition of a robust memory. Ours will be slightly idiosyncratic, but will suffice for the present purposes:  
	\begin{definition}[robust memories]
		For a CA $\mca$ and a real number $p \in [0,1]$, the noisy CA $\mca_p$ is defined to be a CA which, at each step, performs the update defined by $\mca$ with probability $1-p$, and does something else with probability $p$ \footnote{To avoid getting bogged down in details, we will not give a very explicit description of the class of allowed noise models; the ones relevant for the present work are known as ``$p$-bounded error models.'' In the present context, it suffices to let $\mca_p$ implement $\mca$ with probability $1-p$, and to flip the spin in question to $0$ or $1$ with probability $(1\pm \eta)/2$. }.
		$\mca$ is said to be a {\it robust memory} if there exists a constant $p_c$ such that for all $p\leq p_c$, there exist initial states $s_0,s_1$ such that, after being evolved under $\mca_p$, the magnetization of $s_0$ ($s_1$) remains negative (positive) for a time that diverges in the thermodynamic limit. 
	\end{definition}
	
	The proof of Theorem~\ref{thm:majtheorem} is a consequence of the following powerful result of Toom, which we will state in a slightly informal way:  
	\begin{theorem}[Toom's monotone eroder theorem \cite{toom1980stable}]
		If $\mca$ is monotonic and both $\mca$ and $\mca^\neg$ are eroders, then $\mca$ is a robust memory when updated synchronously. 
	\end{theorem}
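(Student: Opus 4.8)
The plan is to deduce the statement from Toom's own structural analysis of monotone eroders, run as a Peierls-type contour argument in spacetime.

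\textbf{Reduction.} First I would use monotonicity to reduce the claim to a single-site error estimate. Because $\mca$ is monotone, the noisy dynamics $\mca_p$ admits a monotone coupling in which the trajectory launched from the all-$1$s state dominates, pointwise in spacetime, every other trajectory. It therefore suffices to show that, starting from all $1$s, the probability that a fixed site carries the value $0$ at time $t$ stays uniformly below $1/2$ for all $t$ and all system sizes. The dual statement—that the all-$0$s state stays mostly $0$—follows by applying the identical argument to $\mca^\neg$, which is monotone (as noted just before the eroder definition) and an eroder by hypothesis. Together these bounds exhibit two initial states whose magnetizations keep opposite signs for times diverging with $L$, which is precisely the definition of a robust memory.

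\textbf{Spacetime explanations.} Next I would pass to the spacetime picture. Couple $\mca_p$ to the noiseless dynamics so that at each spacetime point $(\bfr,t)$ there is an independent ``noise mark'' present with probability at most $p$, marking the sites where $\mca_p$ is permitted to deviate from $\mca$. The central deterministic step is a trace-back: if the fixed site equals $0$ at time $t$ despite the all-$1$s start, I would extract a connected set $E$ of noise marks in the backward light cone (finite, since the neighborhood is finite) that \emph{explains} the error, meaning that deleting every noise mark outside $E$ would still force the same $0$ at the anchor point. This is where the eroder hypothesis is essential: I would invoke Toom's quantitative erosion estimate—for a monotone eroder, an island of $0$s of diameter $d$ is erased within time $O(d)$—to show that any such explanation must be a connected spacetime ``animal'' of strictly positive weight. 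Linear erosion is exactly what forbids an error from being sustained cheaply by isolated noise, forcing every valid $E$ to carry enough marks that its probability decays exponentially in $|E|$.

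\textbf{Peierls counting.} With explanations in hand the estimate closes routinely. The number of connected explanations of size $n$ anchored at the fixed site is at most $C^n$ for a constant $C$ depending only on the lattice and the neighborhood, while each is realized with probability at most $p^{|E|}$ by independence of the noise marks. Hence, for every $t$, the probability that the fixed site is wrong at time $t$ is bounded by $\sum_{n\ge n_0}\#\{E:|E|=n\}\,p^{n}\le \sum_{n\ge n_0} C^{n}p^{n}$, which converges for $p<1/C$ and is smaller than $1/2$ once $p$ is small. This bound is independent of $t$ and of $L$, so setting $p_c$ to any value below $1/C$ yields the claim.

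\textbf{Main obstacle.} The genuinely hard step is the middle one. Converting the purely qualitative eroder hypothesis into a quantitative linear-erosion estimate, and then extracting from a monotone trajectory a \emph{connected} spacetime explanation whose weight is bounded below so that the count $C^{n}$ controls it, is the combinatorial-geometric heart of Toom's theorem. The monotone coupling, the duality reduction, and the final summation are all comparatively mechanical once this geometric input—that the eroder property forces sustained errors to leave large, connected traces of noise—is established.
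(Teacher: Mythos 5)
The first thing to note is that the paper does not prove this statement at all: it is Toom's theorem, imported wholesale from \cite{toom1980stable} and used as a black box to establish Theorem~\ref{thm:majtheorem}. So your proposal can only be judged against the known literature proof, whose architecture---spacetime noise marks, a trace-back producing connected ``explanations'' anchored at an error, a weight bound coming from linear erosion, and a final Peierls count---your sketch does faithfully mirror. But as a proof it has two genuine gaps. First, your reduction is too weak as stated: keeping each site's \emph{marginal} probability of error below $1/2$ does not control the sign of the magnetization, because sites are strongly correlated. Consider a law that is all-$0$s with probability $0.4$ and all-$1$s with probability $0.6$: every marginal error probability is below $1/2$, yet the magnetization is negative with probability $0.4$ at every time, so the memory time is $O(1)$. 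Worse, even a small per-site bound $\varepsilon$ independent of $L$, turned into a fixed-time bound via Markov and then a union bound over times, can never produce a memory time that diverges with $L$. What the contour argument actually delivers, and what you need, is joint control: a simultaneous macroscopic set of errors forces an explanation containing $\Omega(L)$ noise marks, hence has probability $e^{-cL}$, and only then does the memory time diverge (indeed exponentially, consistent with the scaling $\plog \sim (p/p_c)^{\lambda L}$ quoted in the paper).

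Second, the step you yourself flag as the ``genuinely hard'' one is not an ingredient you may borrow---it \emph{is} the theorem. Upgrading the qualitative eroder hypothesis to quantitative linear-time erosion (which for monotone rules goes through Toom's zero-set criterion), and showing that the trace-back yields a connected explanation whose number of noise marks grows linearly with its spacetime extent, are precisely the content of Toom's stability proof; invoking ``Toom's quantitative erosion estimate'' at that point makes the argument circular as a proof of the statement. In short: your outline is a correct roadmap of the standard proof (the same one expounded in, e.g., Ref.~\cite{ponselet2013phase}), but with the reduction step repaired it would still be a citation-shaped proof, which is exactly the status the statement already has in the paper.
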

	While being a monotone eroder is a sufficient condition for $\mca$ to be a robust memory under synchronous updates, the CA rules found by our NCA framework demonstrate that it is by no means necessary. It is also important to note that this theorem does not say anything about robustness under asynchronous updates. The authors are currently unaware of whether or not there exists a simple sufficient condition for an asynchronous automaton to be a robust memory.
	
	Suppose $\mca$ performs a majority vote over a region $\mcv\subset\mcr$. The majority vote of $n$ spins is clearly a monotonic function, and its symmetry guarantees that $\mca$ is self-dual. Therefore, to prove Theorem~\ref{thm:majtheorem}, it suffices to prove results about when majority voting automata are eroders. This is done in the following theorem, the result of which implies Theorem~\ref{thm:majtheorem}:
	
	\begin{theorem}[asymmetric majority voting and erosion]\label{thm:majtheorem_app}
		Let $\sfN_a(\bfr) = \{ \bfr' \, : \,  |\bfr'-\bfr|_\infty \leq a\}$ \footnote{$\sfN_1(\bfr)$ is known as the {\it Moore neighborhood} of $\bfr$.}.
		Consider a set of sites $\mcv$ and a CA rule which sends $s_\bfr$ to the majority vote of the spins on the set $\bfr + \mcv$: 
		\be \mca \, : \, s_\bfr \mt  \maj \( \{ s_{\bfr+\bfdel} \, : \, \bfdel \in \mcv\}\),\ee 
		where $\bfr + \mcv = \{\bfr+\bfdel\, : \, \bfdel \in \mcv\} \subset {\sf N}_a(\bfr + \bfd)$ for some constant offset vector $\bfd$, and $|\mcv| \in 2\zz+1$.
		
		If $\mcv$ is invariant under a $\pi$ rotation symmetry $C_\pi$, then $\mca$ is not an eroder. On the other hand, if $\mcv$ is not invariant under any $\pi$ rotation symmetry, then $\mca$ is an eroder if $a=1$.
	\end{theorem}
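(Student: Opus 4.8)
The plan is to reduce everything to the motion of flat interfaces and a single scalar, the median width of the voting region. For a vector $\bfn$, set $m(\bfn) = \mathrm{med}\{\lan\bfdel,\bfn\ran : \bfdel\in\mcv\}$, which is well defined since $|\mcv|$ is odd. A direct count shows that the half-plane state ``$s_\bfr = 1 \iff \lan\bfr,\bfn\ran \le h$'' is mapped by $\mca$ to the half-plane $\{\lan\bfr,\bfn\ran \le h - m(\bfn)\}$, i.e. the interface with outward normal $\bfn$ moves inward by $m(\bfn)$. Since a majority vote is monotonic, this upgrades to a one-sided bound on arbitrary damage: writing $\dam_t$ for the damage of $\mca^t(s)$, if $\dam_t\subseteq\{\lan\bfr,\bfn\ran\le h\}$ then $\dam_{t+1}\subseteq\{\lan\bfr,\bfn\ran\le h-m(\bfn)\}$. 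This monotone half-plane bound is the workhorse for the erosion direction.

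Next I would establish the dictionary ``$\mcv$ is $C_\pi$-symmetric $\iff$ $m(\bfn)$ is linear in $\bfn$.'' The forward implication is immediate: central symmetry about the centroid $\bfc$ (which, as $|\mcv|$ is odd, must be a lattice point lying in $\mcv$) makes the multiset $\{\lan\bfdel,\bfn\ran\}$ symmetric about $\lan\bfc,\bfn\ran$, so $m(\bfn)=\lan\bfc,\bfn\ran$. The reverse implication is exactly where the hypothesis $a=1$ enters, and is the combinatorial heart of the theorem: for a general neighborhood linearity of the median does not force symmetry (for instance the one-dimensional region $\{-2,0,1\}$ has $m\equiv 0$ yet is not centrally symmetric), but I expect that for $\mcv\subseteq\{-1,0,1\}^2$ the median can be linear only when the region is balanced, which can be checked by tracking how the ordering of the projections $\lan\bfdel,\bfn\ran$ changes across the finitely many critical directions of the Moore neighborhood.

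For the eroder direction I would argue as follows. When $\mcv$ is not $C_\pi$-symmetric, the dictionary gives a nonlinear, hence non-additive, median: there are linearly independent $\bfn_1,\bfn_2$ with $m(\bfn_1+\bfn_2)\ne m(\bfn_1)+m(\bfn_2)$. Put $\bfn_3 = -(\bfn_1+\bfn_2)$; using $m(-\bfn)=-m(\bfn)$ one gets $\sum_i m(\bfn_i)\ne 0$, and after possibly replacing every $\bfn_i$ by $-\bfn_i$ we may assume $\sum_i m(\bfn_i)>0$. Because $\bfn_1,\bfn_2$ are independent and $\sum_i\bfn_i=0$, the three normals positively span the plane, so the circumscribing triangle $T_t = \bigcap_i\{\lan\bfr,\bfn_i\ran\le h_i\}\supseteq\dam_t$ is bounded. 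Applying the monotone half-plane bound to each face, each level $h_i$ drops by $m(\bfn_i)$ per step, so $\sum_i h_i$ decreases by the fixed positive amount $\sum_i m(\bfn_i)$; starting from a finite domain this forces $T_t$, and hence $\dam_t$, to be empty after a number of steps linear in the domain's size. This simultaneously yields the ballistic erosion rate claimed in the main text.

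For the non-eroder direction, linearity of the median, $m(\bfn)=\lan\bfc,\bfn\ran$, means every flat interface moves exactly as under the rigid translation by $-\bfc$; concretely $\mca = T_{-\bfc}\circ\mca_0$, where $\mca_0$ is the majority voter on the origin-symmetric region $\mcv_0 = \mcv - \bfc$. It therefore suffices to produce a finite configuration that $\mca_0$ never erases, since $T_{-\bfc}$ only translates it. I would take a large Wulff polygon of $1$s whose faces lie along lattice directions: using $0\in\mcv_0$ together with $\mcv_0=-\mcv_0$, every such flat face has strictly more than half of its votes on the inward side and so is stable, while a sufficiently blunt (clipped) corner retains a $1$-majority as well. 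The verification that the corners can be stabilized simultaneously is the delicate point here, and is the main obstacle on this side --- note that a naive round droplet fails, since its extremal cells along the lattice axes get eaten, so flat faces in the right directions are genuinely needed. Overall I expect the positive (erosion) direction, and in particular the $a=1$ rigidity of the median in the reverse dictionary, to be the crux of the whole argument.
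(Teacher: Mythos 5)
Your median-function reformulation is sound in outline and genuinely different from the paper's route. Where the paper (Proposition~\ref{prop:dw_motion} and Lemma~\ref{lemma:moving_erosion}) tracks domain walls only along the four lines $\g_{x/y}^\pm$ and bounds the damage by intersecting two families of moving diamond-shaped regions, you encode the motion of \emph{every} flat interface in the single function $m(\bfn)$ and extract erosion from three normals with $\bfn_1+\bfn_2+\bfn_3=\bfzero$ and $\sum_i m(\bfn_i)>0$, so that a bounding triangle empties at a constant rate. Your half-plane computation is correct (majority rules really do map half-plane states to half-plane states, which is what makes this specialization work), the monotone upgrade to arbitrary damage is correct, and the shrinking-triangle argument is correct, gives the ballistic rate, and avoids the paper's case analysis on the velocities $v^\pm_{x/y}$; it is, in effect, Toom's eroder criterion specialized to majority voters. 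The forward dictionary and the reduction $\mca = T_{-\bfc}\circ\mca_0$ are also fine.

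The problem is that both load-bearing steps are left as expectations, and you flag this yourself. The serious one is the reverse dictionary: for $\mcv\subseteq\sfN_1(\bfzero)$, linearity of $m$ must force $C_\pi$ symmetry. This is exactly the combinatorial core that the paper establishes with its partition-counting argument (the regions $R_\lfloor,R_\rceil$ of Fig.~\ref{fig:lboxes}); in your proposal it appears only as something you "expect," with a suggested but unexecuted verification, so the erosion direction is incomplete precisely at its crux. The statement is true and provable roughly along the lines you hint at: since $m$ is piecewise linear on cones and odd, global linearity forces a single $\bfc\in\mcv$ to realize the median in every generic direction; this says every open half-plane bounded by a generic line through $\bfc$ contains exactly $(|\mcv|-1)/2$ points of $\mcv$, equivalently opposite rays out of $\bfc$ carry equal numbers of points of $\mcv$; and because the only collinear triples in a $3\times 3$ grid are evenly spaced (rows, columns, and the two diagonals), equal counts on opposite rays force genuine antipodal pairing about $\bfc$, i.e.\ $C_\pi$ symmetry. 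Your 1D example $\{-2,0,1\}$, and equally the paper's $a=2$ counterexample (whose median is identically zero while the region is asymmetric), confirm that this is where $a=1$ is indispensable---but identifying the crux is not the same as proving it.

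The second gap is the invariant configuration in the non-eroder direction: you leave the simultaneous stabilization of the polygon's corners unverified and call it the main obstacle, which it is---as you correctly note, naive droplets genuinely get eaten. The paper's own elementary argument here is comparably sketchy, so this gap is more forgivable; but your reduction to a centered symmetric region $\mcv_0=-\mcv_0\ni\bfzero$ sets up a clean finish that you missed. With $|\mcv_0|=2k+1$, any $(k+1)$-element subset of $\mcv_0$ either contains $\bfzero$ or, by pigeonhole on the $k$ antipodal pairs, contains a full pair; hence the convex hull of every minimal zero-set contains $\bfzero$, and Toom's minimal-zero-set criterion (the "more abstract proof" the paper alludes to) immediately gives that $\mca_0$, and therefore $\mca$, is not an eroder. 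With these two repairs your argument would be complete, and its erosion half arguably tidier than the paper's.
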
 
	
	Note that we have phrased the result as an iff in the main text (c.f. Theorem~\ref{thm:majtheorem}), since there we assumed $\bfd = \bfzero,\, \mcv \subset \sfN_1(\bfzero)$. 
	Note also that by composing $\mca$ with a translation $T_\bfd$ along $\bfd$, we can shift $\mcv$ so that $\{\bfr + \bfdel \, : \, \bfdel \in \mcv\} \subset \sfN_a(\bfr)$. The composition with $T_\bfd$ is innocuous for synchronous updates \footnote{This shift is innocuous for synchronous updates only. Composing with a translation can drastically modify the behavior of a CA if the updates are performed asynchronously.}, since if $\mca$ is an eroder then so too is $\mca \circ T_\bfd$. In what follows we will assume that this shift has been performed. 
	
	We will first show that $\mca$ is not a robust memory if $\mcv$ is invariant under a $\pi$ rotation (regardless of $a$). To avoid getting bogged down in details, our discussion will be rather laconic. 
	
	\begin{proof} 	
		The basic point is that $C_\pi$ symmetry implies that domain walls of uniform slope cannot move under the dynamics. This follows simply from the fact that by translation symmetry, an infinitely long domain wall of uniform slope must, under $\mca$, either remain at rest or move in a definite direction while maintaining its shape. The latter possibility is forbidden by $C_\pi$ symmetry: if a domain wall of sign $s$ moves along $\uva$, then a domain wall of sign $-s$ must move along $-\uva$. Since $\mca = \mca^\neg$, this is only possible if $\uva = - \uva$, and hence the domain wall must in fact remain stationary.  
		
		This argument shows that the error-correcting dynamics of $C_\pi$-symmetric majority voting must be driven by domain wall curvature. Since the dynamics is deterministic, synchronous, and occurs on a lattice, a minority domain whose boundary has everywhere sufficiently small curvature will never be erased under $\mca$. This can be shown formally by an exhaustive analysis of the way the dynamics treats the intersection of a domain wall with slope $s$ with a domain wall of slope $s'$. As long as $s,s'$ do not differ by too much, the junction between the two domain walls can be shown to also be invariant under $\mca$. 
		
		A faster but more abstract proof uses Toom's characterization of eroders in terms of minimal zero sets, which is explained nicely in Ref.~\cite{ponselet2013phase}. While the details will not be provided here, the basic argument is to first note that if $\mcv$ is invariant under a $\pi$ rotation about some lattice point and contains an odd number of points then it has to have a point $\bfr_0$ at its center, and to then show that the symmetry implies that the convex hulls of each minimal zero-set must intersect at $\bfr_0$. 
	\end{proof}

	We now turn to showing that if $\mcv$ is not invariant under a $\pi$ rotation, and if $a=1$, then $\mca$ is an eroder \footnote{A proof in terms of minimal zero sets is again possible, but we will adopt a more elementary approach. }. 
	To do this, we need to first introduce some notation. Define the lines 
	\be \g_x^\pm(\bfr) = \{ \bfr + (2t, \pm t) \},\qq \g_y^\pm(\bfr)  = \{ \bfr + (t,\pm 2t) \},\qq t \in \rr.  \ee 
	The explicit dependence on the basepoint $\bfr$ will be omitted from our notation when its presence is not essential. 
	
	\begin{proposition}[motion of $\g_{x/y}^\pm$ domain walls]\label{prop:dw_motion} 
		Consider domain walls aligned along the lines $\g_x^\pm(\bfr), \g_y^\pm(\bfr)$ for some $\bfr$, and suppose the CA rule in question has no $C_\pi$ symmetries, and is defined with $\mcv \subset \sfN_1$, viz., with $a =1$ (c.f. the statement of Theorem~\ref{thm:majtheorem_app}). Then under the dynamics defined by $\mca$: 
		\begin{enumerate}
			\item domain walls aligned along $\g_x^\pm$ move along either $+\uvx$ or $-\uvx$ with speed $v=1$, or remain motionless. 
			\item those aligned along $\g_y^\pm$ move along either $+\uvy$ or $-\uvy$ with speed $v=1$, or remain motionless.
		\end{enumerate}
	\end{proposition}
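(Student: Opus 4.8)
The plan is to exploit the translation invariance of a straight domain wall along its own direction to collapse the infinite dynamics onto a one-parameter family of configurations, and then to pin down the motion using three ingredients: (i) monotonicity of the majority rule, (ii) translation covariance of $\mca$, and (iii) the bounded range of the neighborhood ($a=1$). Throughout I treat the $\g_x^\pm$ walls (slope $\pm\tfrac12$); the $\g_y^\pm$ walls (slope $\pm2$) are handled by exchanging the roles of $\uvx$ and $\uvy$, and the two orientations of each wall (which side carries the $1$s) are interchanged by self-duality of $\maj$, so I may fix the $1$s to lie above the wall.

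First I would set up the one-parameter family. A wall aligned along $\g_x^+(\bfr)$ is by definition invariant under translation by the primitive lattice vector $\bfe=(2,1)$ parallel to the wall. Since $\mca$ commutes with all lattice translations, it preserves $\bfe$-invariance, so both $s$ and $\mca(s)$ descend to the quotient $\zz^2/\langle\bfe\rangle\cong\zz$. Because $(1,0)$ and $\bfe$ generate $\zz^2$, every $\bfe$-invariant slope-$\tfrac12$ wall is a horizontal unit translate of a reference wall $s^{(0)}$; write $s^{(k)}=T_{\uvx}^{k}s^{(0)}$, where $T_{\uvx}$ is unit translation along $\uvx$. Note $(0,1)\equiv-2(1,0)$ modulo $\bfe$, so a vertical shift is automatically a horizontal shift within this family; this is precisely why any rigid motion of a slope-$\tfrac12$ wall can be recorded as a single integer, the horizontal offset $k$.

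Next I would show $\mca$ acts on this family as a uniform shift. Monotonicity of $\maj$ is the key input. From $T_{\uvy}s^{(0)}\prec s^{(0)}$ (raising the $1$-region strips off the boundary layer) and monotonicity, $T_{\uvy}\mca(s^{(0)})\prec\mca(s^{(0)})$, so $\mca(s^{(0)})$ is monotone within each column; moreover sites deep in either bulk are unchanged, so $\mca(s^{(0)})$ is neither all-$0$ nor all-$1$. Being $\bfe$-invariant, column-monotone, and nontrivial, it is again a genuine slope-$\tfrac12$ wall (the relation $h'(i+2)=h'(i)+1$ forced by $\bfe$-invariance fixes the slope), hence $\mca(s^{(0)})=s^{(\Delta)}$ for a fixed integer $\Delta$. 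Translation covariance then upgrades this to $\mca(s^{(k)})=T_{\uvx}^{k}\mca(s^{(0)})=s^{(k+\Delta)}$, so $\mca$ shifts the wall uniformly by $\Delta$ sites along $\uvx$. Replacing $\uvx\leftrightarrow\uvy$, using row-monotonicity and $\bfe'=(1,2)$, shows a $\g_y^\pm$ wall is shifted uniformly along $\uvy$.

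The final step, and the main obstacle, is the speed bound $|\Delta|\le1$. Since $\mcv\subset\sfN_1$, the value $\mca(s)_{\bfr}$ depends only on the three columns $\{i-1,i,i+1\}$ of the input, so the interface obeys a finite-speed-of-propagation bound of one site per step in $\uvx$; in the $\zz$-labeling this is exactly $|\Delta|\le1$, giving $\Delta\in\{-1,0,+1\}$, i.e.\ motion along $\pm\uvx$ at speed $1$ or no motion. I expect the cleanest rigorous route is a sandwich argument establishing $s^{(-1)}\prec\mca(s^{(0)})\prec s^{(+1)}$: checking, over the $O(1)$ inequivalent local environments of a slope-$\tfrac12$ wall with $\mcv\subset\sfN_1$, that a site on the $0$-side of $s^{(+1)}$ still sees a majority of $0$'s (and dually that a site on the $1$-side of $s^{(-1)}$ sees a majority of $1$'s). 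This is where both the hypothesis $a=1$ and the commensurate slopes $\pm\tfrac12,\pm2$ are essential: for larger neighborhoods or generic slopes the interface need not remain straight or advance by a single site. The standing hypothesis that $\mca$ has no $C_\pi$ symmetry is not needed for the trichotomy itself; it enters downstream to guarantee that at least one of the four wall families genuinely moves ($\Delta\neq0$), the all-stationary alternative being exactly the $C_\pi$-symmetric situation excluded earlier, and this nonzero drift is what ultimately forces erosion.
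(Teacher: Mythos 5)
Your quotient construction and the use of monotonicity are sound in spirit, but the final step---the speed bound $|\Delta|\le 1$---is exactly where the proof breaks, and it breaks because you discard the no-$C_\pi$ hypothesis. Since $(0,1)\equiv -2(1,0)$ modulo $\bfe=(2,1)$, a vertical displacement of a slope-$\tfrac12$ wall by one site is $\Delta=\pm 2$ in your labeling, and locality does not exclude it: the sites that flip under such a move all lie within $\ell_\infty$-distance $1$ of the input wall, so no finite-speed-of-propagation argument can rule it out. Concretely, take $\mcv=\{\uvy-\uvx,\,\uvy,\,\uvy+\uvx\}$, the majority vote over the row above the center. This satisfies $\mcv\subset\sfN_1$ with $|\mcv|=3$ odd, and on any slope-$\tfrac12$ wall the three votes are nested along their row, so the output at $\bfr$ equals the input at $\bfr+\uvy$: the wall translates rigidly along $-\uvy$ every step, i.e.\ $\Delta=\pm 2$. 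This single example falsifies both your finite-speed claim ("in the $\zz$-labeling this is exactly $|\Delta|\le1$") and your proposed sandwich $s^{(-1)}\prec\mca(s^{(0)})\prec s^{(+1)}$. The rule is of course $C_\pi$-symmetric---about the lattice point $\uvy$, not the origin (Theorem~\ref{thm:majtheorem_app} forbids symmetry about \emph{any} lattice point)---so it is removed precisely by the hypothesis you assert "is not needed for the trichotomy itself.'' The paper invokes asymmetry at exactly this juncture: its footnote argues that motion of a $\g_x^\pm$ wall normal to $\uvx$ is possible only if $\mca$ votes over a single row or column of $\sfN_1$, and every such vote is $C_\pi$-symmetric about some lattice point. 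Your case-check over local wall environments therefore cannot be symmetry-free; it must use the asymmetry of $\mcv$ to exclude normal motion, and without that ingredient the trichotomy is simply false.

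There is also a smaller, repairable gap in your second step: column-monotonicity, $\bfe$-invariance, and nontriviality do \emph{not} force the output into the translate family $\{s^{(k)}\}$. The profile $h'(0)=0$, $h'(1)=2$, extended by $h'(x+2)=h'(x)+1$, satisfies all three conditions yet is not a horizontal translate of the staircase. You need the parallel row-monotonicity statement (obtained from $T_{\uvx}s^{(0)}\prec s^{(0)}$ and translation equivariance, exactly as you derived column-monotonicity), which gives $h'(x+1)-h'(x)\in\{0,1\}$; combined with $h'(x+2)=h'(x)+1$ this forces the standard staircase and hence $\mca(s^{(0)})=s^{(\Delta)}$. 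With that patch, and with the exclusion of normal motion restored to its proper place as a consequence of broken $C_\pi$ symmetry (as in the paper's footnote), your argument becomes a more detailed rendering of the paper's proof, which reaches the same trichotomy by noting that an $a=1$ neighborhood sees at most one step of the wall, so the wall moves rigidly, and then killing the normal direction via the symmetry argument.
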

	\begin{proof}
		This follows by translation invariance and the fact that the box size of $a=1$ is small enough to prevent multiple ``steps'' (c.f. Fig.~\ref{fig:lboxes}) in the lattice-scale domain wall from being seen by one update region. The only possibility is for the steps in a $\g_{a}^\pm$ domain wall to move along $+\uva$, $-\uva$, or remain motionless \footnote{That $\g_a^\pm$ domain walls cannot move normal to $\uva$ can be seen by contradiction. Suppose such domain walls could move. One may verify by inspection that this would only be possible if $\mca$ performed a majority vote along a single row or column of the Moore neighborhood $\sfN_1$. Such a situation is ruled out by our assumption of broken $C_\pi$ symmetry. }.
	\end{proof}

	To prove the remaining part of Theorem~\ref{thm:majtheorem_app}, we will use the following Lemma: 
	
	\begin{figure*}
		\includegraphics[width=.75\tw]{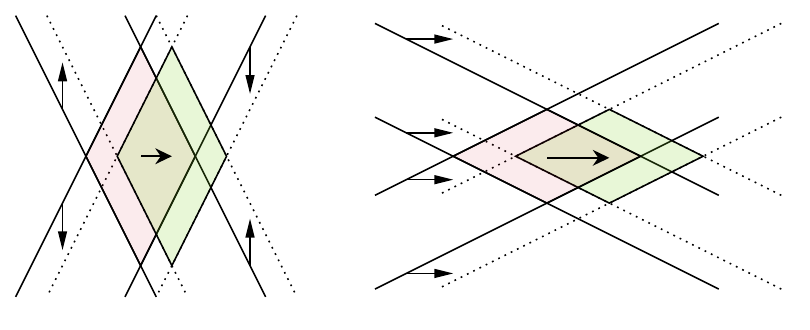} \caption{\label{fig:moving_erosion} Some considerations relevant to the proof of Lemma~\ref{lemma:moving_erosion}. In the left panel, domain walls along $\g_y^\pm$ move along $\mp \uvy$, with the result that the bounded region shaded in red ($\mcd'(t)$ in the proof) moves along $+\uvx$ to the location of the green region. The right panel is analagous, and shows the motion of $\mcd(t)$ in the scenario where domain walls along $\g_x^\pm$ move along $+\uvx$.   }
	\end{figure*}
	
	\begin{lemma}[moving domain walls implies erosion]\label{lemma:moving_erosion} 
		Let $\mca$ be a monotonic, self-dual CA rule. Suppose that domains oriented along $\g_{x/y}^\pm$ move along $\uvx/\uvy$ with speed $v_{x/y}^\pm \in \{0,1,-1\}$ under $\mca$, with at least on of the $v_{x/y}^\pm \neq 0$. Then $\mca$ is an eroder. 
	\end{lemma}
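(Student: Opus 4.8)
The plan is to reduce the statement to a purely geometric fact about a shrinking convex region, using monotonicity to pass from an arbitrary finite damage set to a single ``filled polygon'' configuration, and then using self-duality together with Proposition~\ref{prop:dw_motion} to track the rigid motion of that region's bounding walls.

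First I would enclose the damage $\dam(s)$ of any finite input $s$ in a bounded convex polygon $P$ whose edges all lie along the four line families $\g_x^\pm,\g_y^\pm$; the eight admissible outward normals are spread over all directions (no gap exceeds $180^\circ$), so such a bounding $P$ always exists. Writing $P=\cap_i H_i$ as an intersection of half-planes $H_i$, each bounded by a single $\g$-line, I would let $s_{H_i}$ be the configuration equal to $1$ on $H_i$ and $0$ elsewhere, and $s_P$ the filled polygon. Since $\dam(s)\subseteq P\subseteq H_i$ we have $s\prec s_P\prec s_{H_i}$, so monotonicity gives $\dam(\mca^t(s))\subseteq\bigcap_i\dam(\mca^t(s_{H_i}))$ for all $t$. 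The point of this reduction is that each $\mca^t(s_{H_i})$ is exactly understood: its boundary is one infinite straight $\g$-wall, which by Proposition~\ref{prop:dw_motion} simply translates rigidly along $\pm\uvx$ or $\pm\uvy$ at speed $0$ or $1$. Thus the damage at time $t$ is contained in the intersection $P(t)$ of rigidly translated half-planes and, crucially, \emph{the corner dynamics of $\mca$ never enter}; it suffices to show $P(t)=\emptyset$ at some finite $t$.

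Next I would use self-duality to pin down the motion of both polarities. Because $\mca=\mca^\neg$, the spin-flip of a half-plane state has the \emph{same} boundary line and evolves under the same rule, so the geometric wall moves in the same direction irrespective of which side carries the $1$s; equivalently, each slope has a fixed ``winning side,'' and the two parallel edges of a given slope share one common normal velocity. (This is exactly the feature that fails once $C_\pi$ symmetry is present, where the two motions are instead forced to cancel.) Encoding $P(t)=\{\,n_i\cdot\bfr\le a_i+c_i t\,\}$, emptiness for large $t$ is equivalent, by Helly/LP duality, to the existence of $\lambda_i\ge 0$ with $\sum_i\lambda_i n_i=\bfzero$ and $\sum_i\lambda_i c_i<0$. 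I would then trim the case analysis by noting that a lattice reflection or rotation conjugate of an eroder is again an eroder, so that without loss of generality the guaranteed nonzero velocity is that of the $\g_x^+$ walls, moving in a convenient direction.

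The final and most delicate step is to exhibit the required $\lambda$, equivalently to identify the direction along which $P(t)$ is squeezed to nothing. The heuristic is that a nonzero $\g_x^+$ velocity makes the slope-$\tfrac12$ slab slide bodily toward its winning side, while a transverse bounding edge of a different slope caps that direction, so the sliding slab is pushed past the cap and the common intersection empties in finite time. The main obstacle is to make this robust for \emph{every} admissible pattern of the four velocities rather than just the case where the others vanish: one must verify, using that self-duality assigns each slope a fixed winning side and that monotonicity makes the single-wall motions mutually consistent, that the translated half-planes always acquire an empty common intersection. Carrying out this sign/feasibility bookkeeping at the corners of $P$ --- equivalently, checking Toom's minimal-zero-set criterion for this particular four-slope geometry --- is where essentially all of the remaining work lies.
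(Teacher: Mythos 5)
Your setup is sound and is, in spirit, the same as the paper's: enclose the damage in regions bounded by $\g$-lines, use monotonicity to propagate the inclusion under $\mca^t$, and use self-duality to guarantee that the two parallel walls of a given slope translate with one common velocity, so the bounding walls move rigidly and corner dynamics never enter. Your half-plane formulation of the monotonicity step ($s \prec s_{H_i}$ for each half-plane $H_i$, hence $\dam(\mca^t(s)) \subseteq \bigcap_i \dam(\mca^t(s_{H_i}))$) is correct, and is in fact a careful way of justifying what the paper asserts directly for its bounded regions. The problem is that your proof stops exactly where the content of the lemma lies. You reduce the claim to an LP/Farkas feasibility statement---for \emph{every} admissible velocity pattern $(v_x^+,v_x^-,v_y^+,v_y^-) \neq (0,0,0,0)$ there exist $\lambda_i \ge 0$ with $\sum_i \lambda_i n_i = \bfzero$ and $\sum_i \lambda_i c_i < 0$---and then explicitly defer its verification (``where essentially all of the remaining work lies''). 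Without that verification nothing has been proven: a priori some velocity pattern could leave $P(t)$ nonempty for all time, and ruling this out is precisely what the lemma asserts. So as written this is a genuine gap, not a routine detail to be filled in.

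The paper closes this gap with a device that collapses the case analysis, and which you could graft directly onto your setup: instead of one octagonal region, bound the damage by \emph{two} diamonds, $\mcd$ bounded only by $\g_x^\pm$-lines and $\mcd'$ bounded only by $\g_y^\pm$-lines. Self-duality makes each diamond translate rigidly (parallel walls share a velocity, so widths are preserved), and a short computation enumerates the possible translation velocities of each: $\mcd$ can only move with velocity in $\left\{(0,0),\,(\pm 1,0),\,\left(0,\pm\tfrac12\right),\,\left(\pm\tfrac12,\pm\tfrac14\right)\right\}$, and $\mcd'$ only with velocity in the $x \lra y$ reflection of that set. These two sets intersect only in $(0,0)$. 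Hence if any one of the $v_{x/y}^\pm$ is nonzero, the two diamonds translate with \emph{distinct} velocity vectors, so they are disjoint after a time $t_\star$ of order the initial linear size, and since $\dam(\mca^t(s)) \subset \mcd(t) \cap \mcd'(t)$ the damage is erased. This two-region argument amounts to a constructive, uniform solution of your LP for all velocity patterns at once; if you prefer to keep the LP language, you must still exhibit the multipliers $\lambda_i$ for each of the possible sign patterns (or give an argument of this geometric type that handles them uniformly), since that feasibility check is the heart of the proof rather than bookkeeping.
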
 
	\begin{proof} 
		This follows from monotonicity and a simple geometric argument, part of which is illustrated in Fig.~\ref{fig:moving_erosion}. 
		
		Consider an initial state $s$ with damage $\dam(s)$ contained in a finite region. Increase the number of 1s in the initial state to obtain the state $s_\mcd$ with damage $\dam(s_\mcd) = \mcd$, where $\mcd \supset \dam(s)$ is a finite-sized diamond-shaped region bounded by lines $\g_x^+(\bfr^+_{1,2}), \g_x^-(\bfr^-_{1,2})$ with appropriate basepoints $\bfr^\pm_{1,2}$ (e.g. the shaded red region in the right panel of Fig.~\ref{fig:moving_erosion}). Under evolution by $\mca$, monotonicity then guarantees that 
		\be \dam(\mca^t(s)) \subset \mcd(t),\ee 
		where $\mcd(t)$ is the diamond-shaped region bounded by the lines $\g_x^\pm(\bfr^\pm + v_x^\pm t)$. One may verify the following statements by inspection of Fig.~\ref{fig:moving_erosion}: 
		\begin{enumerate}
			\item If $v_x^+, v_x^- = \pm1$, then $\mcd(t)$ moves along $+\uvx$ with speed 1;
			\item If $v_x^+ = - v_x^- = 1$, then $\mcd(t)$ moves along $-\uvy$ with speed $1/2$;
			\item If $v_x^+ = 1, v_x^- =0$, $\mcd(t)$ moves parallel to $\g_x^-$,  
		\end{enumerate}
		together with analogous symmetry-related statements ($\mcd(t)$ of course does not move if $v_x^+ = v_x^- = 0$). 
		
		The desired result then follows by applying the same logic to the lines $\g_y^\pm$. We increase the number of 1s in $s$ to form that state $s_{\mcd'}$ with damage $\dam(s_{\mcd'}) = \mcd'$, where $\mcd' \supset \dam(s)$ is a finite-sized diamond-shaped region bounded by lines $\g_y^+(\bfr^+_{1,2}), \g_y^-(\bfr^-_{1,2})$ with appropriate basepoints $\bfr^\pm_{1,2}$ (e.g. the shaded red region in the left panel of Fig.~\ref{fig:moving_erosion}). Monotonicity then implies $\dam(\mca^t(s)) \subset \mcd'(t)$. Similarly to the above, $\mcd'(t)$ moves in different ways depending on the values $v_y^\pm$. In particular, one may verify that there exists a time  $t_\star$ of order $\sqrt{\max(|\mcd|,|\mcd'|)}$ such that  
		\be \mcd(t) \cap \mcd'(t) = \emptyset\quad \forall \, \, t > t_\star,\ee 
		unless $v_x^\pm = v_y^\pm = 0$ (which contradicts the assumption of the lemma). This follows since even if $\mcd(t),\mcd'(t)$ move in the same direction, they do so with different speeds. 
		Since $\dam(\mca^t(s)) \subset \mcd(t) \cap \mcd'(t)$, this shows the claim. 	
	\end{proof}

	We can now finish the proof of Theorem~\ref{thm:majtheorem_app}. By the result of Lemma~\ref{lemma:moving_erosion}, we only need show that at least one of the domain walls parallel to the lines $\g_{x/y}^\pm$ moves under $\mca$ if $\mca$ lacks a $C_\pi$ symmetry. 
	
	\begin{figure*}
		\includegraphics[width=.4\tw]{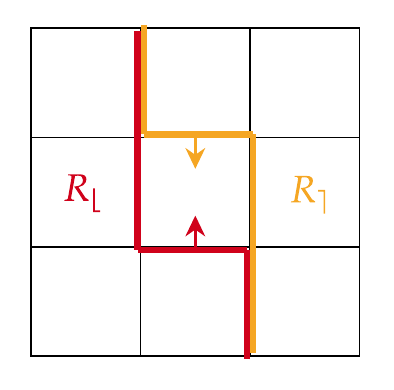}
		\caption{\label{fig:lboxes} An example of a partitioning of $\sfN_1(\bfzero)$ relevant for the proof of theorem~\ref{thm:majtheorem_app}. The red and orange lines indicate boundaries of domain walls parallel to $\g_y^-$; $R_\lfloor$ is the region to the left of the red line and $R_\rceil$ is the region to the right of the orange line. If $\bfzero\not\in\mcv$, then either the step edge of the domain wall marked in red must move up as indicated by the arrow, or the edge of the domain wall marked in orange must move down.   }
	\end{figure*}
	
	\begin{proof} 
		Consider first the case where $\bfzero \not\in \mcv$ (which in the present setting then automatically implies that $\mcv$ possesses no $C_\pi$ rotation symmetry). In this case, {\it every} domain wall aligned along one of the lines $\g_{x/y}^\pm$ must move under $\mca$. This is simply because $\sfN_1(\bfzero) \setminus \{\bfzero\}$ can be partitioned in various ways into two sets each containing four sites, which by $|\mcv|\in 2\zz+1$ imply that the ``step edges'' on each such domain wall must move under $\mca$. An example of this is illustrated in Fig.~\ref{fig:lboxes}, where $\mcv$ is partitioned into a left part $R_\lfloor$ and a right part $R_\rceil$. Since 
		\be |\mcv\cap R_\lfloor| \neq |\mcv\cap R_\rceil|,\ee   the step edge of the domain wall must move either up or down. Identical considerations apply for domain walls parallel to other $\g_{x/y}^\pm$ lines. 
		
		Now consider the case where $\bfzero \in \mcv$. It is then possible for domain walls parallel to the lines $\g_{x/y}^\pm$ to remain fixed under $\mca$ (since it is, e.g., now possible to have $|\mcv\cap R_\lfloor| = |\mcv\cap R_\rceil|$). However, we claim that if {\it all} such domain walls are fixed, then $\mcv$ is invariant under a $C_\pi$ rotation about $\bfzero$. This can be shown using the same partitioning of $\sfN_1(\bfzero) \setminus \{\bfzero\}$. Consider the partitioning shown in Fig.~\ref{fig:lboxes}, together with an analogous partitioning obtained by reflection through the $y$ axis, with analogous regions $R_{\lceil}, R_{\rfloor}$. If domain walls along both $\g_y^+$ and $\g_y^-$ are to be fixed under $\mca$, then we must have both $|\mcv\cap R_\lceil| = |\mcv\cap R_\rfloor|$ and $|\mcv\cap R_\lfloor| = |\mcv\cap R_\rceil|$. Subtracting these equations from one another implies 
		\be |\mcv \cap \{\uvy\}| = |\mcv \cap \{-\uvy\}|.\ee 
		Identical arguments for other types of rotation- and reflection-related partitions show that if all of the $\g_{x/y}^\pm$ domain walls are fixed under $\mca$, then 
		\be |\mcv \cap \{ \bfa\}| = |\mcv \cap \{-\bfa\}| \,\, \forall \, \, \bfa \in \sfN_1(\bfzero) \setminus \{\bfzero\}.\ee 
		This is however only possible if $\mcv = C_\pi(\mcv)$, a contradiction. 
	\end{proof}
	
	Finally, we note that the restriction to $a=1$ in the theorem statement is in fact necessary. To show this, we explicitly construct a rule for which $\mcv$ has no $C_\pi$ symmetry but which fails to be an eroder. An example with $a=2$ is 
	\be \mcv = \{\bfzero, \uvx-\uvy,-\uvx-\uvy,2\uvx+2\uvy,-2\uvx+2\uvy\}.\ee 
	One may directly verify that this rule fails to erode square minority domains of large enough linear size $\ell$ (taking $\ell > 10$ suffices).

	\section{Mean field theory}\label{app:meanfield}
	
	In this Appendix, we perform a mean-field analysis of the CA dynamics studied in the main text. We will return to letting the spins $s_\bfr$ be valued in $\{-1,1\}$. 
	
	We begin by writing down a master equation for the time evolution of the expected magnetization $\lan s_\bfr(t)\ran$ at site $\bfr$, assuming as usual that CA updates with rule $\mca$ are applied with probability $1-p$, and noise with bias $\eta$ is applied with probability $p$. We work in the setting of asynchronous updates, where each cell is updated according to independent Poisson processes of rate $1/dt$. In this setting, $\lan s_\bfr(t)\ran$ evolves as 
	\be\label{master} \lan s_\bfr(t+dt)\ran = (1-dt)\lan s_\bfr(t)\ran + dt \((1-p)\lan \mca(s|_{\mcr_\bfr}(t))\ran + p\eta\). \ee 
	This equation is exact, but contains $\lan \mca(s|_{\mcr_\bfr}(t))\ran$, which in general is a complicated sum of $|\mcr|$-point functions of the $s_\bfr(t)$. In mean-field, we simplify this by assuming that the probability distribution over the $s_\bfr(t)$ is entirely determined by its 1-point marginals, viz., we assume that for all $\bfr\neq\bfr'$, 
	\be \lan s_\bfr(t) s_{\bfr'}(t) \ran = \lan s_\bfr(t) \ran\lan s_{\bfr'}(t) \ran .\ee 
	We will use the symbol $m(t)$ to denote $\lan s_\bfr(t)\ran$, which we assume to be independent of $\bfr$ (which holds for the steady states of all of the CA of interest to us).  
	With this notation, we may write (keeping the time dependence implicit)
	\bea \label{mfa} \lan \mca(s|_{\mcr_\bfr}) \ran & = \sum_{w\in \S_\mcr} \mca(w)\prod_{\bfr \in \mcr} \left\lan \frac{1+w_\bfr s_\bfr}2 \right\ran \\ & = \sum_{w \in \S_\mcr} \mca(w) \prod_{\bfr \in \mcr} \frac{1+w_\bfr m }2 \\ 
	& \equiv \mca_{MF}(m).\eea 
	\eqref{master} then becomes 
	\be \label{mf}\p_t m =p(\eta - m) + (1-p)\( \mca_{MF}(m)- m \). \ee
	Since the RHS is a function of the single real variable $m$, we may always write 
	\be \p_t m = - \frac{\partial F_{\rm eff}}{\partial m} \ee 
	for an appropriate free energy $F_{\rm eff}$. Explicitly, 
	\be \label{mfexp} F_{\rm eff} = -p\eta m + \frac{m^2}2 -(1-p)  \int dm \, \mca_{MF}(m) = \sum_{n=1}^{|\mcr|+1} \frac{f_n}{n!} m^n.\ee 
	The coefficients $f_n$ can be computed by inspecting powers of $m$ after expanding the expression for $\mca_{MF}(m)$ in \eqref{mfa}. For $f_1$, we have 
	\be f_1 = -p \eta - \frac{1-p}{2^9} \sum_{w \in \S_\mcr} \mca(w). \ee 
	For $f_2$, 
	\bea f_2 & = 1 - \frac{1-p}{2^9}\sum_{w\in \S_\mcr} \sum_{\bfr\in \mcr} w_\bfr \mca(w) \\ 
	& = 1 - (1-p) |\mcr| (2\sfP[\mca(w) = w_\bfr] - 1),\eea  
	where $\sfP[\mca(w) = w_\bfr]$ is the probability that a randomly chosen input cell in $\mcr$ agrees with the output $\mca(w)$, with $w$ drawn uniformly on $\S_\mcr$.
	
	Out of all the $f_n$, only the sign of $f_2$ can change when $p$ is varied. The point where $f_2=0$ defines the mean-field prediction of the critical point in the case where a $\zt$ symmetry is present; in this case the mean-field prediction for the critical noise strength $p_c$ is 
	\be p_c = 1 - \frac1{|\mcr|(2\sfP_{\rm match} - 1)},\ee 
	where we have used the shorthand $\sfP_{\rm match} = \sfP[\mca(w) = w]$. Taking $p_c \ra 0$, we see that mean-field theory predicts an ordered phase only if 
	\be \sfP_{\rm match} > \frac12 \( \frac1{|\mcr|} + 1 \),\ee 
	which with high probability will not be satisfied for a random rule.
	For CA defined on the Moore neighborhood, order is possible in mean-field only if $\sfP_{\rm match} > 5/9$. 
	
	Explicit expressions are easy to arrive at in the case where $\mca$ performs a majority vote; in this case, mean-field always predicts a transition at nonzero $p$. Explicit polynomial expressions for the majority votes of $n$ spins are easy enough to work out. For example,  
	\bea \maj(m_1,m_2,m_3) & = \frac12\(S_1(\{m_i\}) - S_3(\{m_i\})\) \\ 
	\maj(m_1,m_2,m_3,m_4,m_5) & = \frac18\(3S_1(\{m_i\}) - S_3(\{m_i\}) + 3S_5(\{m_i\})\), \eea 
	where $S_k(\{m_i\}) = \sum_{i_1 < \cdots < i_k} m_{i_1} \cdots m_{i_k}$ is the symmetric sum of degree $k$ in the $m_i$. Taking $m$ to be uniform then gives an explicit expression for the coefficients $f_n$ appearing in \eqref{mfexp}. Letting $F_n$ be the mean-field free energy of an automaton which performs a majority vote on an odd number $n$ of spins, we find 
	\bea F_3 & = -p \eta m + \frac14(3p-1)m^2 + \frac{1-p}8 m^4 \\ 
	F_5 & = -p\eta m + \frac1{16}(15p -7)m^2 + \frac{5(1-p)}{16} m^4 \\ 
	F_7 & = -p\eta m + \frac1{32}(35p - 19)m^2 + \frac{35(1-p)}{64} m^4 .\eea 
	Giving mean-field estimates of $p_{c,MF} = 1/3,7/15,19/35$, for $n=3,5,7$, respectively (with $p_{c,MF} \ra 1$ as $n\ra \infty$). 
	
	Interestingly, almost none of the CA rules found by our NCA framework are stable in mean-field. In particular, the free energies of the non-majority-vote rules studied in the main text are 
	\bea F_{\sf B} & = -\eta pm + \frac{107 - 75p}{64} m^2 - \frac{45(1-p)}{64}m^4 + \frac{3(1-p)}8m^6 - \frac{15(1-p)}{128}m^8 + \frac{1-p}{64}m^{10}\\ 
	F_{{\sf C}} & = -\(\eta p + \frac{1-p}{128}\)m + \frac{423 - 295p}{256}m^2 + \frac{1-p}{64}m^3 - \frac{171(1-p)}{256}m^4 + \cdots + \frac{17(1-p)}{1280}m^{10} \\ 
	F_{{\sf D}} & = - \( \eta p - \frac{19(1-p)}{256}\) m + \frac{843- 587p}{512}m^2 - \frac{19(1-p)}{192}m^3 - \frac{167(1-p)}{256}m^4 + \cdots + \frac{7(1-p)}{640} m^{10}.
	\eea 
	$F_{\sf B}$ and $F_{\sf C}$ have a unique minimum for all $p,\eta$, and $F_{\sf B}$ is symmetric under $m \mt -m$ despite this not being an exact symmetry of $\sfB$'s rule table.  $F_{\sf D}$ on the other hand has multiple minima, one of which is at $m>1$. For $\eta=0$, multiple minima onset at $p_{c,MF} \approx 0.88$, a gross overestimate of $p_c$. 
	
\end{document}